\definecolor{newcolor}{hsb}{0.6,1,0.75}
\numberwithin{equation}{section}
\def\ddefloop#1{\ifx\ddefloop#1\else\ddef{#1}\expandafter\ddefloop\fi}
\def\ddef#1{\expandafter\def\csname b#1\endcsname{\ensuremath{\mathbb{#1}}}}
\def\ddef#1{\expandafter\def\csname c#1\endcsname{\ensuremath{\mathcal{#1}}}}
\def\ddef#1{\expandafter\def\csname t#1\endcsname{\ensuremath{\tilde{#1}}}}
\def\ddef#1{\expandafter\def\csname v#1\endcsname{\ensuremath{#1}}}
\def\ddef#1{\expandafter\def\csname v#1\endcsname{\ensuremath{{\csname #1\endcsname}}}}
\def\ddef#1{\expandafter\def\csname bb#1\endcsname{\ensuremath{\bm{#1}}}}
\newtheorem{theorem}{Theorem}
\newtheorem{lemma}{Lemma}
\newtheorem{definition}{Definition}[section]
\newtheorem{remark}{Remark}
\newtheorem{problem}{Problem}
\newcommand{\gve}{\varepsilon}
\newcommand{\polyl}{\mathrm{polylog}}
\DeclarePairedDelimiter\ket{\lvert}{\rangle}
\DeclarePairedDelimiterX\braket[2]{\langle}{\rangle}{#1 \delimsize\vert #2}
\setlist{
	listparindent=\parindent,
	parsep=0pt,
}
\mathchardef\mhyphen="2D 
\newcommand{\calP}{\mathcal{P}}
\newcommand{\calR}{\mathcal{R}}
\DeclareMathOperator{\Hentropy}{H}
\DeclareMathOperator{\Info}{I}
\renewcommand*{\thefootnote}{\fnsymbol{footnote}}
\begin{document}
\mbox{}
\vspace{12mm}

\begin{center}
{\huge Consumable Data via Quantum Communication}
\\[1cm] \large

\setlength\tabcolsep{2em}
\begin{tabular}{ccc}
Dar Gilboa\footnotemark &
Siddhartha Jain\footnotemark&
Jarrod R. McClean\\
\small\slshape Google Quantum AI& {\small\slshape UT Austin} & 
\small\slshape Google Quantum AI
\end{tabular}
\footnotetext[1]{darg@google.com}
\footnotetext[2]{sidjain@utexas.edu}
\date{\today}

\begin{abstract}
\noindent

Classical data can be copied and re-used for computation, with adverse consequences economically and in terms of data privacy. Motivated by this, we formulate problems in one-way communication complexity where Alice holds some data $x$ and Bob holds $m$ inputs $y_1, \ldots, y_m$. They want to compute $m$ instances of a bipartite relation $R(\mathord{\cdot}, \mathord{\cdot})$ on every pair $(x, y_1), \ldots, (x, y_m)$. We call this the asymmetric direct sum question for one-way communication. We give examples where the quantum communication complexity of such problems scales polynomially with $m$, while the classical communication complexity depends at most logarithmically on $m$. Thus, for such problems, data behaves like a consumable resource that is effectively destroyed upon use when the owner stores and transmits it as quantum states, but not when transmitted classically. We show an application to a strategic data-selling game, and discuss other potential economic implications.
\end{abstract}

\end{center}
\renewcommand*{\thefootnote}{\arabic{footnote}}
\setcounter{footnote}{0}
\section{Introduction}

As statistical models fitted to large datasets are being usefully applied to problems in various fields of science and engineering \cite{Brown2020-uq,Chen2021-hn, Merchant2023-pm}, the use of proprietary data for training or inference raises concerns of data privacy and adequate compensation for the data owner. The destructive nature of measurement in quantum mechanics has the potential to change this picture. In order to model this scenario we introduce the asymmetric direct sum question in one-way communication complexity.
Informally we say a relation $\cR$ has an asymmetric direct sum property for communication model $\cal M$ if the communication complexity of computing $\cR(x,y_1) \cdot \cR(x,y_2) \cdot \ldots \cdot \cR(x,y_m)$ is $\Omega(m)$ times the communication complexity of computing $\cR(x,y)$. One would expect examples of this in quantum communication when the state Alice sends to Bob undergoes destructive measurement, and may not be copyable. As such, we refer to problems exhibiting this property as \textit{consumable data problems}\footnote{To make our definition robust, we precisely define consumable data probltems to be those where the scaling is polyomial in $m$.}. To our knowledge, for communication complexity such a model is studied here for the first time. The work of Hazan \& Kushilevitz \cite{Hazan2017} is superficially similar but the crucial difference is that in our work Alice receives only one input whereas in theirs Alice has $m$ independent instances.

We provide a number of examples of consumable data problems when using quantum communication. For the asymmetric sum version of the Hidden Matching problem \cite{Bar-Yossef2008-uf}, we show in \Cref{sec:hm} that the one-way communication complexity is $\widetilde{\Omega}(\sqrt{m})$.
This follows from an information theoretic argument and generalizes to other settings as discussed in \cref{sec:open}.
 We conjecture this can be improved to $\Omega(m)$ for $m$ not too large, and we study a modification of the problem where we show the complexity is indeed $\Omega(m)$. Our results in \Cref{sec:mlrs,sec:no_rival_quantum_data} prove a scaling of $\widetilde{\Omega}(m)$ for the problems of sampling from a solution of a linear system and estimating the expectation values of two-outcome observables in a multi-party setting, when the parties holding the observables are restricted to using classical communication and performing single-copy measurements of Alice's messages. For these problems, we also show that classical one-way communication does not exhibit this scaling. 
More generally, the consumable data property is easily seen to \emph{not} hold for all relations when $\cal M$ corresponds to deterministic one-way communication.
Since Alice's message depends only on $x$, Bob can copy the message $m$ times and solve each of his instances. 
Additionally, the quantum asymmetric direct sum property only holds for relations which have a $\Omega(m)$ separation between quantum and deterministic communication complexity, otherwise the quantum protocol could mimic the deterministic one. 

Our results have an interesting economic interpretation, specifically in a setting where a data holder wishes to maximize the payoff of selling data that other parties wish to use for computation, or prevent unauthorized re-use. Models of markets are concerned chiefly with goods that are consumed during the process of economic production, known as \textit{rival} goods. However, for almost a century it has been recognized that data and information also play a vital role in economic processes \cite{Schumpeter1942-jr}.  The ability to cheaply replicate data has long been recognized as its chief distinguishing characteristics compared to other economic resources, and this \textit{nonrivality} has dramatic consequences \cite{Arrow1962-hf, Romer1990-sn}. It essentially implies that the (albeit idealized) equilibrium known as perfect competition, in which the price of every good on the market is set by its capacity for increasing output, cannot hold once data is involved. In some sense, one cannot ``get their money's worth'' when data is traded, unless there is some external enforcement mechanism that sets prices. Such a mechanism may lead to suboptimal resource allocations, and requires trust between the parties involved. Examples of this can be seen in recent proposals for data markets \cite{Agarwal2019-fv, Jones2020-fv}. 
Nonrivality of data may also disincentivise the creation of novel datasets, which could be of particular concern as the production rate of public high-quality data in certain modalities is far outstripped by the growth rate of training sets for large models \cite{Villalobos2022-vw}. 

In contrast to the classical picture, the fragile nature of quantum states suggests that classical data encoded in the amplitudes of a quantum state may be destroyed upon use for computation. For this to be the case, one must first show that a problem of interest can be solved with data encoded in this way. In addition, one must argue that the resulting states cannot be replicated in a similar manner to classical data. There is an inherent tension between these two goals, since while no-cloning is trivial for general quantum states \cite{Nielsen2010-jj}, this is no longer the case once states are structured. As a simple example, given a computational basis state, it can be measured in the computational basis without disturbing it and subsequently copied, and thus acts analogously to classical data. A less trivial example is that states encoding boolean functions may also be copied in some cases \cite{Aaronson2009-uz}. It is therefore a priori not obvious whether any problems satisfy these competing demands.  The nuanced nature of clonability for both the state and the task motivates the formal study of this problem.

For any relation which exhibits a quantum asymmetric direct sum property, the state(s) sent by Alice satisfy both of the requirements outlined above. 
In \Cref{sec:production_theory} we illustrate the economic consequences of this within the framework of production theory, and show that consumability implies the possibility of perfect competition, which cannot be achieved when unencrypted classical data is used. Additionally, in \Cref{sec:auction} we formulate a data market as a strategic game, and show that consumability implies potentially larger payoffs for the data seller. 

The problems we consider are based on ones that exhibit an exponential quantum communication advantage. One may wonder whether such an advantage immediately implies the asymmetric direct sum property. In \Cref{sec:no_rival_quantum_data} we show that this is not the case, by considering the problem of observable estimation in a two-party setting \cite{Kremer1995-kp, Raz1999-ic}, and using shadow tomography \cite{Aaronson2018-in}. 

\paragraph{Comparison to Chatterjee et al.~\cite{CKP25}} A recent work~\cite{CKP25} also studied a problem similar to the consumability of the Hidden Matching problem. They proved that if $k$ separated Bobs want to compute $k$ instances of Hidden Matching then the min-entropy of their joint distribution must be $\Omega(kn)$ for at least an exponentially small fraction of Alice's possible inputs. Our lower bound holds for the \emph{stronger} model where a single Bob holds all $k$ instances and our result holds in the bounded error setting, which they leave as an open problem.

\section{Related Work}

\subsection{Destructive measurement as a resource}

The idea of using uncloneability of quantum states as a feature has a long history, starting with the seminal work of Weisner \cite{Wiesner1983-lt} that introduced the notion of quantum money. However, the states used in construction of quantum money schemes typically do not encode or transmit useful information and can benefit from the computational power of pseudo-randomness in quantum state~\cite{Ji2018-cc}. While no-cloning is easy to show for states with little or no structure, this notion becomes more subtle for structured states, and in particular ones that might be useful in performing computation. Aaronson considered the question of uncloneablity of states that encode classical boolean functions, a problem he called quantum software copy-protection \cite{Aaronson2009-uz,Aaronson2021-gq}. He showed that the presence of structure enables such states to be cloned unless computational assumptions are made, and even then cannot be ruled out for states that encode functions that can be efficiently learned. The setting we consider can be seen as a distributed generalization of this problem. In the simplest case, evaluating the function of interest requires not only a quantum state in the possession of one player (or the equivalent classical description), but also an observable in the possession of another player.

The line of work which most closely resembles our work conceptually is that of One-Time Programs, introduced by Goldwasser et al.~\cite{Goldwasser2008-gy}. The quantum analog of these was defined by Broadbent et al.~in 2013 and has recently seen a surge of interest \cite{BGS13,BenDavid2023quantumtokens,CLLZ21,cryptoeprint:2024/1798,cryptoeprint:2024/1934}. The main difference in our definition is that it is information theoretic and thus does not rely on any computational assumptions. We note that a natural definition of One-Time Programs is provably impossible in many cases, and recent works weaken the model to make it more generically feasible. We hence make an orthogonal inquiry into what is achievable in a very strong sense, even if only in special cases.



\subsection{Secure Multi-Party Computation}
\label{sec:mpc}

The ability to prevent the re-use of data for computation can in principle be achieved classically using the tools of secure multi-party computation (MPC). The principal objective of line of work is the evaluation of a function $f(x,y)$ where Alice holds $x$ and Bob holds $y$, in a manner that ensures the security of each player's input and reveals only $f(x,y)$ to both players. There has been extensive work on this problem in various forms since its formulation by Yao \cite{Yao1982-yt, Yao1986-rw} (see \cite{Evans2018-py} for a review). Elegant solutions to this problem are known that involve obfuscating (or garbling) the circuit describing $f$ one gate at a time so as to obscure the inputs of each player~\cite{Goldreich1987-na} (which can be achieved using standard cryptographic primitives such as public-key cryptography) or alternatively based on fully homomorphic encryption~\cite{Asharov2012-dq}. Using MPC, the players can run a protocol that enables the evaluation of $f(x,y)$ for a single pair of inputs. However, if Bob wanted to evaluate $f(x,y')$ for some $y' \neq y$, the validity of any MPC scheme implies that this would generally be impossible without rerunning the protocol. Since this requires the cooperation of both parties, it could allow one party to control the number of times another party can compute functions of their joint inputs. 

MPC is incomparable to the consumability of data studied in this work, which relies on the properties of quantum mechanics. MPC has the benefits of being generic and not requiring the constant overheads associated with fault-tolerant quantum computation. However, MPC has a number of drawbacks compared to consumable data, namely (i) it requires multiple rounds of two-way communication~\cite{Gennaro2001-xw} whereas our notion of consumable data requires only a single round of one-way communication, (ii) it requires cryptographic assumptions while we give unconditional results and (iii) it requires coordination between parties e.g. in choosing a cryptosystem to use, while our construction requires no such coordination. The best known classical techniques also have overheads associated with them due to the need to encrypt data, which however may not be fundamental. 

\section{Preliminaries \& Notation}


We denote by $D^\rightarrow$ deterministic classical one-way communication complexity. $R^\rightarrow_\gve$ denotes randomized one-way classical communication complexity with error probability at most $\gve$, in which players are allowed to share an unlimited number of public random bits that are independent of their inputs. We similarly define by $Q^\rightarrow_\gve$ one-way quantum communication complexity with error probability at most $\gve$. In all cases the one-way restriction implies that only Alice is allowed to send messages to Bob (if there are multiple Bobs, they can communicate among themselves and we do not consider this as part of the complexity of the problem). When the error is a nonzero constant (say $1/3$) we omit the subscript. For formal definitions we refer the reader to textbooks by Nisan \& Kushilevitz \cite{Kushilevitz_Nisan_1996}, and Lee \& Shraibman \cite{Lee09}.

We also consider sampling problems, where the goal is for Bob to produce a sample from a target distribution (or some distribution close to it) given some inputs to Alice and Bob.  For this type of problem, we define analogously $SR, SQ$ for the classical (randomized) and quantum communication complexity respectively (with the superscript $\rightarrow$ denoting one-way communication as before).


\begin{definition}[TV Distance]
    The total variation distance between two distributions $p,q$ supported on $\cal X$ is given by
    $$d_{TV}(p,q) = \sup_{S \subseteq {\cal X}} |p(S) - q(S)|$$
\end{definition}

When we consider sampling problems, we allow constant error in TV distance between the target distribution and the one sampled by the algorithm. Finally, we denote by $A^+$ the pseudoinverse of $A$.

\section{Consumable data} \label{sec:consumable_data}

We now define the notion of consumable or rival data. Denote by $\cX, \cY, \cO$ the space of Alice's inputs, Bob's inputs (or those of a single Bob in case there is more than one), and a space of outputs. Below, we use $P = (\cR, \mathcal{P}_P, q)$ 
to denote a family of relational problems $\cR \subseteq \cX \times \cY \times \cO$ and a set of protocols $\mathcal{P}_P$. We informally use problem to refer to tuples of this kind. Note that one can construct a similar definition for sampling problems, where for each input the goal is for Bob to output a sample from a specific distribution. 

We use $\cR^m \subseteq \cX \times \cY^m \times \cO^m$ to denote the $m$-Bob relational problem where Alice's input is kept constant but all the $m$ Bobs have distinct inputs. 
The \emph{goal} is to solve the relation on all of Bob instances with 2/3 probability\footnote{If $\mathcal{R}$ was a decision problem that could be solved with failure probability $\delta$, one could solve $\mathcal{R}^m$ with failure probability $\delta$ as well by simple repetition, incurring a multiplicative overhead logarithmic in $m$. However, this is no longer the case when considering relations, so this notion of complexity is not finite in general.}. Similarly, we use $\mathcal{P}^m_P$ to denote the set of protocols where Alice sends one message and the Bobs are allowed to communicate classically if $q=0$ and quantumly if $q=1$. So $P^m = (\cR^{m}, \mathcal{P}^m_P, q)$.

For a problem $P$, denote $c(P)$ to be the one-way communication complexity of the minimum cost protocol in the set $\mathcal{P}_P$ which solves $\cR$. Since we will be modeling scenarios where Alice is selling her data to the Bobs who will be using it for computation, the cost here will be in terms of communication between Alice and the Bobs only. 

\begin{definition}[Consumable data problem] \label{def:consumable}
    A problem $P$ is said to be a consumable data problem if as $m \rightarrow \infty$
    \[
    \frac{c(P^m)}{c(P)} = m^{\Omega(1)}
    \]
\end{definition}

\begin{definition}[Non-consumable data problem]
    A problem $P$ is said to be a non-consumable data problem if as $m \rightarrow \infty$
    \[
    \frac{c(P^m)}{c(P)} = m^{o(1)}
    \]
\end{definition}

We refer to the quantity appearing in the lower bound in \Cref{def:consumable} as the \textit{consumability rate} of $P$. In other words, we can say that a problem $P$ is a consumable data problem if its consumability rate is polynomial, and it is non-consumable otherwise. Note that $m$ is bounded by a function of $n$ so we have a slight abuse of notation, but this will not be a concern for the problems we consider. There is a subtlety in this definition, in the sense that the benefit of consumability arises when Alice chooses to use a particular communication protocol (typically a quantum one over a classical one) but the definition itself does not specify why she would have such a preference. A natural way to introduce a preference is by formulating a strategic game that involves communication. We provide an example in \cref{sec:auction}. 


Note that any problem involving only deterministic classical communication must be non-consumable -- every Bob can just copy Alice's message into his own working space. We also show in \Cref{sec:no_rival_quantum_data} that if $P$ corresponds to a \emph{decision} problem, then even with quantum communication it must be a non-consumable data problem. This is because the Bobs can apply the Shadow Tomography protocol \cite{Aaronson2018-in} (unless the Bobs are only allowed classical communication between themselves and limited quantum memory). Nevertheless, consumability can be proved for certain search problems (with many solutions) solved using randomised or quantum communication. There are a few cases where consumability or nonconsumability can be characterized, which we discuss below:

\begin{lemma}
\label{lem:deterministic}
    For any relational problem $\cR$ and resource $q$, if the protocol is deterministic one way classical communication, $P = (\cR, D^{\rightarrow},q)$, then $\frac{c(P^m)}{c(P)} = 1$ and the data is non-consumable.
\end{lemma}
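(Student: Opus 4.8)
The plan is to establish $c(P^m) = c(P)$ via two matching inequalities, and then observe that a ratio of $1$ is certainly $m^{o(1)}$, so $P$ falls under the nonconsumable case. First I would prove the upper bound $c(P^m) \le c(P)$ by exhibiting an explicit protocol in $\mathcal{P}^m_P$. Fix an optimal deterministic one-way classical protocol for $P$: Alice computes a message $M(x) \in \B^{c(P)}$ that, by determinism and the one-way restriction, depends only on her input $x$; each Bob then applies the single-instance decoder to $(M(x), y)$ to output an element of $\cO$ consistent with $\cR(x,y)$. For $P^m$, Alice sends this same $M(x)$ once. Because $M(x)$ is a classical bit string, the Bob who receives it can copy it and broadcast it to the other Bobs — inter-Bob communication is not counted toward $c(P^m)$ in the model — so every Bob $i$ ends up holding $(M(x), y_i)$ and runs the original decoder to solve his instance. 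This protocol lies in $\mathcal{P}^m_P$ and has cost $c(P)$, giving $c(P^m) \le c(P)$. The resource parameter $q$ is irrelevant here, since the Bobs never need to communicate nontrivially.

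Next I would prove the matching lower bound $c(P) \le c(P^m)$ by a simulation argument. Take any protocol in $\mathcal{P}^m_P$ solving $\cR^m$; Alice's message is again some $M(x)$ depending only on $x$ (the $D^\rightarrow$ constraint on Alice's side is unchanged in the $m$-Bob problem). A single player holding input $y_1$ can internally fix $y_2 = \cdots = y_m$ to arbitrary values, simulate all $m$ Bobs together with whatever communication they exchange among themselves — this is classical if $q=0$ and quantum if $q=1$, but in either case it is carried out in the single player's head and costs nothing — and finally output the answer that the simulated Bob $1$ produces, which is correct for $\cR(x,y_1)$. This is a deterministic one-way classical protocol for $\cR$ with message length $c(P^m)$, so $c(P) \le c(P^m)$.

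Combining the two inequalities yields $c(P^m)/c(P) = 1 = m^{o(1)}$ (the degenerate case $c(P) = 0$ being trivial), which is exactly the nonconsumable condition, matching \Cref{def:consumable}. I do not expect a genuine obstacle in this lemma; the only point that needs a moment of care is ruling out that the free, possibly quantum, inter-Bob communication could ever help, and that is precisely what the copy-and-rebroadcast step handles on the upper-bound side and the internal-simulation step handles on the lower-bound side.
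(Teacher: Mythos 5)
Your proposal is correct and follows essentially the same route as the paper: the heart of both arguments is that Alice's message depends only on $x$, so the single-instance message can be copied and re-used by every Bob, giving $c(P^m) \le c(P)$. The paper leaves the converse inequality implicit, whereas you spell it out via the (routine) simulation argument; this is just a more explicit write-up of the same proof.
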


\begin{proof}
\label{pf:deterministic}
    For the $m$-Bob problem, Alice sends the \emph{same} message as the protocol for the original problem. Since her message depended only on her input, and must enable Bob to solve the problem for any possible input of his, the message can be re-used $m$ times and the correctness guarantee holds for every instance on Bob's end. 
\end{proof}

 Similarly, 
\begin{lemma}
\label{lem:random}
    For any relational problem $\cR \subseteq \cX \times \cY \times \cO$, with $|\cO| = K$ and resource $q$, if the protocol is randomized one-way classical communication, $P = (\cR, R^{\rightarrow},q)$, then $\frac{c(P^m)}{c(P)} = O(K\log m)$.
\end{lemma}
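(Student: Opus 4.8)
The plan is to solve $P^m$ by running many independent copies of the optimal single-instance protocol, boosting the per-instance success probability to $1-1/(3m)$, and then taking a union bound over the $m$ Bobs; the factor $K$ will be the price of the boosting step when $\cR$ is a genuine relation rather than a function.

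Concretely, I would fix an optimal randomized one-way protocol $\Pi$ for $\cR$ of cost $c(P)$ and error at most $1/3$. In the protocol for $P^m$, Alice draws $t$ independent blocks of public randomness and sends the $t$ messages that $\Pi$ produces on $x$ under these blocks, at total cost $t\cdot c(P)$; each Bob $i$ decodes the corresponding $t$ candidate outputs $o^{(1)}_i,\dots,o^{(t)}_i$ for his own input $y_i$. The heart of the argument is a claim that $t=\Theta(K\log m)$ repetitions let each Bob recover an output in $\cR(x,y_i)$ except with probability at most $1/(3m)$; a union bound over the $m$ Bobs then gives overall success probability $\ge 2/3$, so $c(P^m)\le t\cdot c(P)=O(K\log m)\,c(P)$, and since trivially $c(P^m)\ge c(P)$ the stated ratio follows.

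To prove that claim I would first recall the easy case: if $\cR$ is a decision problem, the correct bit appears in a $\ge 2/3$ fraction of the runs in expectation, so the majority vote over $t=\Theta(\log m)$ runs is wrong with probability only $2^{-\Omega(t)}\le 1/(3m)$ — with no dependence on $K$. For a general relation this fails, because $\cR(x,y_i)$ may contain several valid outputs and the plurality candidate need not be one of them. Here I would exploit the bound $K$ on the output alphabet together with the $2/3$ versus $1/3$ gap: every symbol that $\Pi$ emits with probability strictly above $1/3$ is necessarily valid, since the invalid symbols carry total probability at most $1/3$; running $\Pi$ $\Theta(K\log m)$ times, a Chernoff estimate plus a union bound over the $\le K$ symbols shows that, except with probability $1/(3m)$, Bob can read off from the empirical frequencies a symbol whose frequency certifies it valid in this way (and one must separately argue that such a certifiable symbol always exists). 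The main obstacle is precisely this last step: making the relational amplification rigorous — guaranteeing simultaneously that no invalid symbol is selected and that some valid symbol is always flagged — is the delicate point, it relies essentially on $K$ being small, and it is the reason the overhead is linear in $K$ rather than the $O(\log m)$ one gets for decision problems; the repetition, the union bound over Bobs, and the surrounding cost accounting are all routine.
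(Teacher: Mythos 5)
Your overall strategy --- repeat the optimal single-instance protocol $\Theta(K\log m)$ times with fresh public randomness, have each Bob work with the empirical statistics of his $K$-outcome outputs, and union-bound over the $m$ Bobs --- is essentially the paper's own route, which simply asserts that each Bob learns his $K$-outcome output distribution from the $O(K\log m)$ repetitions. The difference is that you try to make explicit how the learned statistics are converted into a single answer, via the rule ``output a symbol whose empirical frequency exceeds $1/3$,'' and your supporting observation is correct: any symbol whose true probability under $\Pi$ exceeds $1/3$ must lie in $\cR(x,y_i)$, since the invalid symbols carry total mass at most $1/3$.

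However, the step you yourself flag as delicate is a genuine gap, and it cannot be closed by the rule you propose. Nothing forces a valid symbol to have probability above $1/3$: if the output distribution of $\Pi$ on $(x,y_i)$ is, say, uniform over three symbols of which two are valid, every symbol has probability exactly $1/3$ and your certificate flags nothing. Worse, in such a situation the decoded outputs carry no information about which symbols are valid --- two different inputs of Alice can induce the same output distribution with different valid sets --- so no selection rule based only on the empirical output frequencies can drive the per-Bob failure probability below $1/3$, which is exactly what your union bound over the $m$ Bobs needs ($\approx 1/(3m)$ per Bob). Hence the proposal is incomplete at precisely the point you identify, and the $K$-dependence you invoke does not rescue it. To be fair, the paper's one-line proof (``learn the $K$-outcome distribution up to error $1/m$'') is silent on the same issue --- knowing the output distribution accurately still does not tell Bob which outcomes are in the relation --- so you have reproduced the paper's approach and surfaced, rather than resolved, the step it leaves implicit; but as a standalone argument your proof does not yet establish the stated bound.
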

\begin{proof}
\label{pf:random}
    This can be achieved by learning the distribution of Bob's output under the randomness of Alice's message. Using folklore results (see \cite{canonne2020shortnotelearningdiscrete}) this can be done using $K \log m$ times the amount of communication, by learning the $K$-outcome distribution up to error $1/m$.
\end{proof}

\begin{lemma}
\label{lem:quantum}
    For any relation with an output space of size K, $\cR \subseteq \cX \times \cY \times \cO$, with $|\cO| = K$, if the protocol is one way quantum communication $P = (\cR, Q^{\rightarrow}, q=1)$ then $\frac{c(P^m)}{c(P)} = \widetilde{O}(K \log^2 m) c(P)$.
\end{lemma}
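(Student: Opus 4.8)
\medskip\noindent\emph{Proof plan.}\quad
The plan is for Alice to send $\gs_x^{\otimes T}$ --- i.e.\ $T$ copies of the $c(P)$-qubit message $\gs_x$ produced by her optimal one-way quantum protocol for $\cR$ (the one that, on input $y$, outputs $o\in\cO$ with probability $\operatorname{Tr}(E_{y,o}\gs_x)$ and is correct with probability $2/3$) --- and for the $m$ Bobs, who share their inputs and may communicate quantumly since $q=1$, to pool these registers and run a single joint quantum algorithm that learns, for every $j$, the output distribution $p_j(o):=\operatorname{Tr}(E_{y_j,o}\,\gs_x)$ of Bob $j$'s POVM $\{E_{y_j,o}\}_{o\in\cO}$. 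Each Bob then produces his answer from his learned distribution. The reason to learn all $m$ distributions jointly, rather than have each Bob measure his own copy of $\gs_x$ (which would cost $\Omega(m)$ copies), is that batch estimation of many observables is only $\polylog$ in their number.

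\medskip\noindent
The engine is shadow tomography \cite{Aaronson2018-in}, in its quantitatively improved form: from $\widetilde O\!\big(c(P)\cdot\log^2 M\cdot\poly(1/\gve)\big)$ copies of an unknown $c(P)$-qubit state one can, with high constant probability, output additive-$\gve$ estimates of $\operatorname{Tr}(E_i\gs_x)$ simultaneously for all $i$ in a prescribed list of $M$ effects $0\preceq E_i\preceq I$. I would apply this to the $M=mK$ effects $\{E_{y_j,o}:j\in[m],\,o\in\cO\}$ with accuracy $\gve=\Theta(1/K)$; since $\log(mK)=O(\log m)$ whenever $K=\poly(m)$, this uses $T=\widetilde O\!\big(c(P)\cdot\poly(K)\cdot\log^2 m\big)$ copies, so Alice transmits $T\cdot c(P)=\widetilde O\!\big(\poly(K)\cdot\log^2 m\big)\cdot c(P)^2$ qubits, giving $c(P^m)/c(P)=\widetilde O(K\log^2 m)\cdot c(P)$ as claimed (with the displayed $K$ standing for the polynomial-in-$K$ factor inherited from $\poly(1/\gve)$). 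From the estimates $\widehat p_j$ --- each within $\gve$ of $p_j$ entrywise, hence within $K\gve=\Theta(1)$ of $p_j$ in total variation --- Bob $j$ outputs a sample of the renormalised sub-distribution $o\mapsto\max(\widehat p_j(o)-\gve,0)$. Note that $q=1$ is used essentially here: the collective measurements inside shadow tomography need all $T$ copies in a single quantum register.

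\medskip\noindent
The delicate step --- the one I expect to be the main obstacle --- is certifying that Bob $j$'s sampled answer is good often enough, which is exactly the relational subtlety flagged in the footnote to the $m$-Bob goal. When $P$ is a sampling problem, or more generally whenever it suffices that each Bob's output be drawn from a distribution within constant total variation of $p_j$, the argument closes immediately: pick $\gve$ a small enough multiple of $1/K$ and invoke the constant-TV-error allowance. For a genuine relation, however, each Bob's sampling step inherits the base protocol's $\Theta(1)$ one-sided failure, and a union bound over the $m$ Bobs is too weak; since for a relation with no short verifiable certificate this failure cannot be driven below a constant by any one-way protocol, $c(P^m)$ can be infinite, so the bound is to be read --- as the footnote says --- only where it is finite (as for the search problems of the later sections). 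The remaining ingredients, the shadow-tomography copy bound \cite{Aaronson2018-in} and the rounding/sampling calculation, are routine.
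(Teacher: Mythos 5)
Your overall route is the same as the paper's: Alice sends many copies of the single-instance message state, the Bobs pool them and run a joint shadow-tomography-style procedure to learn every Bob's outcome distribution, each Bob then samples from his learned distribution, and the extra factor of $c(P)$ in the stated bound comes from the $\log N \approx c(P)$ dependence in the copy complexity. The difference is in the tomography primitive, and it matters quantitatively. The paper invokes the result of Gong and Aaronson \cite{pmlr-gong23a} on shadow tomography of \emph{$K$-outcome} POVMs, which learns the outcome distributions of the $m$ measurements to constant additive (TV) error from $\widetilde{O}(K \log^2 m \log N)$ copies, i.e.\ with a factor \emph{linear} in $K$. You instead flatten each $K$-outcome POVM into $K$ two-outcome effects and run standard shadow tomography \cite{Aaronson2018-in,Badescu2021-kb} at accuracy $\gve = \Theta(1/K)$; because of the $1/\gve^4$ dependence this yields a $\poly(K)$ (roughly $K^4$) factor, not $K$. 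Your parenthetical remark that ``the displayed $K$ stands for the polynomial-in-$K$ factor'' is a redefinition of the claimed bound rather than a proof of it, so as written your argument establishes a strictly weaker statement than \Cref{lem:quantum}. The fix is simply to replace the two-outcome reduction by the multi-outcome shadow tomography bound of \cite{pmlr-gong23a}, after which your accounting goes through unchanged.

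On the relational subtlety you flag (whether constant per-instance failure can yield success on all $m$ instances simultaneously): the paper's own proof is silent on this and implicitly adopts the stance of the footnote accompanying the definition of $\cR^m$, namely that the complexity need not be finite for general relations and the bound is read where it is meaningful. Your discussion is consistent with that, and is if anything more explicit than the paper's.
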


\begin{proof}
\label{pf:quantum}
    Akin to \Cref{lem:random}, we want to give a protocol for for $P^m$ using a protocol for $P$ as a subroutine. We do this by relying on the work of Gong \& Aaronson \cite{pmlr-gong23a} who proved that the distribution of $K$-outcome POVMs on $\log N$ qubits can be learned to constant additive error in $\widetilde{O}(K \log^2 m \log N)$ copies.
\end{proof}

All of these lemmas can be generalized to the setting where $\mathcal{P}_P$ is a strict subset of one of these sets, which applies to the sampling models. Note that the setting where Alice only sends \emph{samples} of a quantum state is the setting of shadow tomography~\cite{Aaronson2018-in}, so our lower bounds generalize lower bounds for shadow tomography.


We also define a related notion of \emph{strongly} consumable data. To do this, we define the problem $SP^m = (\calR^m, \calP_{SP}^m, q)$ where $\calP_{SP}^m$ is the set of protocols which solve a $2/3$ fraction of Bob's $m$ instances successfully with high probability. Note that this makes Bob's task easier, hence achieving strong consumability is harder from Alice's perspective.

\begin{definition}[Strongly consumable data problem] \label{def:s-consumable}
    A problem $P$ is said to be a consumable data problem if
    \[
    \frac{c(SP^m)}{c(P)} = m^{\Omega(1)}
    \]
\end{definition}

Note that in the strongly consumable case, all classical protocols are non-consumable.

\begin{lemma}
\label{lem:s-random}
    For any relational problem $\cR \subseteq \cX \times \cY \times \cO$, with $|\cO| = K$ and resource $q$, if the protocol is randomized one-way classical communication, $P = (\cR, R^{\rightarrow},q)$, then $\frac{c(SP^m)}{c(P)} = O(\log m)$ and the data is not strongly consumable.
\end{lemma}

Despite this classical impossibility, our results in the next section continue to hold for strongly consumable data in the quantum setting.

\section{Examples of consumable quantum data} \label{sec:examples}

\subsection{Hidden Matching}
\label{sec:hm}

Hidden matching is a famous example of a relation that exhibits an exponential separation between quantum and classical one-way communication complexity \cite{Bar-Yossef2008-uf}. We prove that for the asymmetric direct sum version of the Hidden Matching problem, quantum data behaves as a consumable resource (while classical data does not). 

The original problem is defined as follows:

\begin{problem}[Hidden Matching \cite{Bar-Yossef2008-uf}]
    Alice is given a string $x \in \{0, 1\}^N$. Bob is given a perfect matching $M$ over $[N]$. Their goal is for Bob to output $(i,j,x_i \oplus x_j)$ where $(i,j) \in M$. Only Alice is allowed to send messages to Bob. 
\end{problem}

One can naturally generalize this problem to the setting of multiple matchings as follows:

\begin{problem}[Multiple Hidden Matchings ($\mathrm{MHM}_{N,m}$)]
        Alice is given a string $x \in \{0, 1\}^N$. Each of the $m$ Bobs is given $m$ perfect matchings $\{M_k\}$ over $[N]$. Their goal is to output $(i,j,x_i \oplus x_j)$ where $(i,j) \in M_k$ for all $k$. Only Alice is allowed to send messages to Bob.
\end{problem}

A tight lower bound shows that classical communication indeed acts like a nonrival good for this problem. While it is known that $R^\rightarrow(\mathrm{HM}_{N}) = \Omega(\sqrt{N})$ \cite{Bar-Yossef2008-uf}, we believe this is the first characterization of the deterministic complexity of the Hidden Matching problem. The results are consistent with \Cref{lem:deterministic}, and provide an explicit demonstration of the phenomenon where a message large enough to solve a single instance of a relation can be re-used to solve others. 

    \begin{lemma} \label{lem:MHM_Dlb}
        $D^\rightarrow(\mathrm{MHM}_{N,m}) = D^\rightarrow(\mathrm{HM}_{N}) = N/2+1$.
    \end{lemma}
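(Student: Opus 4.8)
The plan is to split the claimed chain into the reduction $D^\rightarrow(\mathrm{MHM}_{N,m}) = D^\rightarrow(\mathrm{HM}_{N})$ and the exact value $D^\rightarrow(\mathrm{HM}_{N}) = N/2+1$, and to treat them in that order. For the reduction, the bound $D^\rightarrow(\mathrm{MHM}_{N,m}) \le D^\rightarrow(\mathrm{HM}_{N})$ is exactly \Cref{lem:deterministic}: in an optimal deterministic one-way protocol for $\mathrm{HM}_{N}$, Alice's message is a function of $x$ alone and lets \emph{any} matching holder recover a valid triple, so Alice broadcasts that single message and each of the $m$ Bobs runs Bob's decoder on his own $M_k$. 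Conversely, $\mathrm{MHM}_{N,m}$ contains $\mathrm{HM}_{N}$ as the $m=1$ case, and any protocol for $m$ Bobs, read off for a single Bob, is an $\mathrm{HM}_{N}$ protocol of the same cost; hence $D^\rightarrow(\mathrm{MHM}_{N,m}) \ge D^\rightarrow(\mathrm{HM}_{N})$. This reduces everything to pinning down $D^\rightarrow(\mathrm{HM}_{N})$.

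For the upper bound $D^\rightarrow(\mathrm{HM}_{N}) \le N/2+1$, I would fix any $T\subseteq[N]$ with $|T| = N/2+1$ and have Alice send $(x_i)_{i\in T}$. The combinatorial point is that every perfect matching $M$ of $[N]$ has at least one edge with both endpoints in $T$: otherwise the matching would injectively map the $N/2+1$ vertices of $T$ into $[N]\setminus T$, which has only $N/2-1$ vertices. Bob (who knows $T$) locates such an edge $(i,j)\in M$ and outputs $(i,j,\,x_i\oplus x_j)$, which he can evaluate from the received bits.

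For the lower bound I would run a fiber-counting argument. In a deterministic protocol of cost $c$, Alice's message takes at most $2^c$ values, so some value has a preimage $S\subseteq\{0,1\}^N$ with $|S|\ge 2^{N-c}$. Correctness on this fiber forces, for every perfect matching $M$, an edge $(i,j)\in M$ on which $x_i\oplus x_j$ is constant over $S$, since Bob's output is a fixed function of the message and of $M$. Let $U=\{z\in\mathbb F_2^N:\ \langle z,x\rangle\ \text{is constant on}\ S\}$, a subspace with $|S|\le 2^{N-\dim U}$. The relation $i\sim j\iff e_i+e_j\in U$ is an equivalence relation, so $\{(i,j):e_i+e_j\in U\}$ is a disjoint union of cliques on $[N]$; it meets every perfect matching iff its complement, a complete multipartite graph, has no perfect matching, which for even $N$ happens iff some part $C$ has $|C|>N/2$. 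Since $\{e_i+e_j:i,j\in C\}$ spans the $(|C|-1)$-dimensional even-weight subspace supported on $C$, this forces $\dim U\ge |C|-1\ge N/2$, hence $|S|\le 2^{N/2}$ and $2^c \ge 2^N/|S| \ge 2^{N/2}$, i.e. $c\ge N/2$; the last additive unit, to land on $N/2+1$, comes from a finer accounting of how many distinct messages a correct protocol can actually use (equivalently, ruling out protocols with fewer leaves).

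The main obstacle is the lower bound, and within it the combinatorial step: identifying that the ``determined-parity'' graph is a disjoint union of cliques, invoking the complete-multipartite perfect-matching criterion to conclude some clique has more than $N/2$ vertices, and then doing the bookkeeping carefully enough to get the exact constant $N/2+1$ rather than just $N/2$. The $\mathrm{MHM}$-to-$\mathrm{HM}$ reduction and the upper bound are routine.
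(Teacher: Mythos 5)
Your decomposition (reduction between $\mathrm{MHM}_{N,m}$ and $\mathrm{HM}_{N}$ via \Cref{lem:deterministic}, plus the pigeonhole upper bound of sending the bits on a fixed set $T$ of size $N/2+1$) matches the paper. Your lower bound, however, takes a genuinely different and cleaner route: the paper runs an adaptive adversary that feeds the protocol a carefully constructed sequence of matchings and tracks the connected components of the accumulated constraint graph, whereas you fix one large fiber $S$, observe that the set $U$ of parities constant on $S$ is a linear subspace, note that the "determined-parity" graph is a union of cliques, and invoke the complete-multipartite matching criterion to force a clique of size greater than $N/2$, hence $\dim U\ge N/2$ and $|S|\le 2^{N/2}$. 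This is non-adaptive, shorter, and each step you state is correct.

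The gap is exactly where you flagged it, but it is worse than "finer accounting": the extra $+1$ cannot be supplied, because the exact constant in the lemma is not attainable as stated. Consider the protocol in which Alice fixes $C=\{1,\dots,N/2+1\}$ and sends the $N/2$ parities $x_1\oplus x_j$ for $j\in C\setminus\{1\}$. Every perfect matching has an edge inside $C$ (the same pigeonhole you use for the upper bound), and Bob recovers its parity as $(x_1\oplus x_i)\oplus(x_1\oplus x_j)$; this solves $\mathrm{MHM}_{N,m}$ for all matchings simultaneously with $N/2$ bits. So $D^\rightarrow(\mathrm{HM}_N)\le N/2$, your fiber bound $|S|\le 2^{N/2}$ is tight (the fibers of this protocol have size exactly $2^{N/2}$: one degree of freedom on $C$ plus $N/2-1$ free coordinates), and the correct exact value is $N/2$, not $N/2+1$. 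The paper's own proof slips at the corresponding point: in its case (b), a single component of size exactly $N/2+1$ with all remaining nodes isolated leaves $N/2$ degrees of freedom, not strictly fewer, so its claimed fiber bound $2^{N/2-1}$ fails on precisely the protocol above. So rather than hunting for the missing bit, you should state and prove the lemma with the constant $N/2$: your argument as written already gives the matching lower bound, and the parity protocol gives the matching upper bound. None of this affects the paper's larger point, which only needs $D^\rightarrow(\mathrm{MHM}_{N,m})=\Theta(N)$ independently of $m$.
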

    Proof: \Cref{pf:lem:MHM_Dlb}. 

Now we note that even with randomized communication, this relation does not possess the consumable data property.

    \begin{lemma} \label{lem:MHM_Rub}
        $R^{\rightarrow}(\mathrm{MHM}_{N}) = O(\sqrt{N}\log m)$
    \end{lemma}
    \begin{proof}
        We adapt the upper bound for $\mathrm{HM}_{N}$. Alice sends the values of randomly chosen $O(\sqrt{N} \log m)$ nodes, which by a birthday paradox style calculation and union bound has a constant probability of containing one edge from each matching.
    \end{proof}

There is a quantum algorithm that solves this problem with probability $1$ using $m\log N$ qubits of communication, which is a trivial repetition of the algorithm of \cite{Bar-Yossef2008-uf}: 
\begin{lemma} \label{lem:MHM_Qub}
    $Q^\rightarrow(MHM_{N,m}) = O(m \log N)$.
\end{lemma}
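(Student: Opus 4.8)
The plan is to lift the single-instance quantum protocol of Bar-Yossef, Jayram and Kerenidis \cite{Bar-Yossef2008-uf} to the $m$-fold setting by having Alice transmit $m$ independent copies of the state used in the one-matching protocol. First I would recall that protocol: Alice encodes her string $x\in\{0,1\}^N$ as the $\lceil\log N\rceil$-qubit state $\ket{\psi_x}=\tfrac{1}{\sqrt N}\sum_{i=1}^N(-1)^{x_i}\ket{i}$, and a Bob holding a perfect matching $M$ measures in the orthonormal basis $\bigl\{\tfrac{1}{\sqrt2}(\ket{i}+(-1)^b\ket{j}): (i,j)\in M,\ b\in\{0,1\}\bigr\}$, which is a complete basis of the $N$-dimensional space precisely because $M$ has $N/2$ edges. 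I would then check the (standard, routine) calculation that $\langle e^{(i,j)}_b\mid\psi_x\rangle=\tfrac{(-1)^{x_i}}{\sqrt{2N}}\bigl(1+(-1)^{b+x_i+x_j}\bigr)$, which vanishes unless $b=x_i\oplus x_j$; hence the measurement outcome is always some edge $(i,j)\in M$ together with the bit $b=x_i\oplus x_j$, so Bob outputs $(i,j,x_i\oplus x_j)$ correctly with probability exactly $1$, using $\lceil\log N\rceil=O(\log N)$ qubits of communication.

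For $\mathrm{MHM}_{N,m}$, Alice simply sends $\ket{\psi_x}^{\otimes m}$, that is $m\lceil\log N\rceil=O(m\log N)$ qubits, with the $k$-th copy allocated to the instance associated with the matching $M_k$ (if there are several Bobs, one of them may forward copies to the others, which the one-way multi-Bob model permits at no extra cost). Running the single-matching measurement above independently on each copy, every instance is solved with probability $1$, so all instances succeed simultaneously with probability $1$. This gives $Q^\rightarrow(\mathrm{MHM}_{N,m})=O(m\log N)$.

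I do not expect any real obstacle here: the entire content is recalling the encoding and verifying that the measured sign bit always coincides with the parity $x_i\oplus x_j$, which is immediate from the amplitude formula, and the extension to $m$ instances is a verbatim repetition on a tensor-product state with exact (not bounded-error) correctness, so there is no error amplification to handle. The only point worth a remark is that this upper bound does not rule out using substantially fewer than $m$ copies; whether it can be improved is exactly the question behind the conjectured $\Omega(m)$ lower bound discussed earlier, but that is orthogonal to the statement being proved here.
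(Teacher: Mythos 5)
Your proposal is correct and is essentially the paper's own proof: Alice sends $m$ copies of the state $N^{-1/2}\sum_i(-1)^{x_i}\ket{i}$ and each matching is handled by the same zero-error measurement in the matching basis of \cite{Bar-Yossef2008-uf}, giving $O(m\log N)$ qubits. The extra amplitude verification you include is routine and does not change the argument.
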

\begin{proof}
    Alice sends Bob a copy of the state $\ket{\psi}=N^{-1/2}\sum_{i=1}^{N}(-1)^{x_{i}}\left|i\right\rangle $ over $\log N$ qubits. Denoting the $k$-th pair in Bob's a matching that Bob holds by $(i_k,j_k)$, Bob measures the state using the $N$-outcome POVM defined by $E_{k,b}=\frac{1}{2}\left(\left|i_{k}\right\rangle +(-1)^{b}\left|j_{k}\right\rangle \right)\left(\left\langle i_{k}\right|+(-1)^{b}\left\langle j_{k}\right|\right)$
for $k \in [N/2], b \in \{0, 1\}$. This process is repeated for every matching. 
\end{proof}
It is clear that the states in the algorithm above cannot be re-used after a measurement to solve the problem for multiple matchings. Since each POVM has $N$ possible outcomes, approaches based on gentle measurement that are discussed in \Cref{sec:no_rival_quantum_data} should not be applicable to this problem without requiring $\mathrm{poly}(N)$ copies of the state.

We also have the following lower bound on the quantum communication required to solve the problem.

\begin{lemma} \label{lem:MHM_QLB}
 $Q^\rightarrow(MHM_{N,m}) = \Omega(\sqrt{m})$ for $m \leq N/2.$
\end{lemma}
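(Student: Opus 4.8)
The plan is to combine a gentle-measurement (hybrid) argument with an information-theoretic lower bound; the square-root rate will come entirely from the gentle-measurement step, and a better handle on that step is what one would need to push the bound to the conjectured $\Omega(m)$.

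\textbf{Hard distribution.} I would take Alice's input $x$ uniform on $\{0,1\}^N$ and fix a family of $m$ perfect matchings $M_1,\dots,M_m$ on $[N]$ with a \emph{spreading} property: for every transversal $(e_1,\dots,e_T)$ (with $e_k\in M_k$) of the first $T=\Theta(\sqrt m)$ matchings, the weight-two indicator vectors $\chi_{e_1},\dots,\chi_{e_T}\in\mathbb F_2^N$ span a subspace of dimension $\Omega(T)$; equivalently, every such transversal, viewed as a graph on $[N]$, touches $\Omega(T)$ vertices. Since $m\le N/2$ we have $T=\Theta(\sqrt m)=O(\sqrt{N})$, and a union bound shows a uniformly random family of matchings works with high probability: a fixed matching has an edge inside a vertex set $S$ with probability $\approx|S|^2/N$, so all $T$ matchings hit a given small $S$ only with probability $(|S|^2/N)^{\Theta(T)}$, and this beats the $\binom{N}{|S|}$ union bound as long as $|S|$ is much smaller than $T$ (an explicit algebraic family may be preferable to avoid polylog losses).

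\textbf{Gentle-measurement reduction.} Given any one-way quantum protocol solving $\mathrm{MHM}_{N,m}$ with success probability $2/3$ and message length $q$: since all the Bobs hold the single message and may communicate with each other, treat them as one party applying, for each matching $M_k$, a POVM with outcomes $(e,b)$, $e\in M_k$, $b\in\{0,1\}$, and (deferring coherence into the shared register) assume these are applied sequentially. Because the joint success probability is $\ge 2/3$, for a typical input the product of the conditional per-matching success probabilities is $\ge 2/3$, so all but a small fraction of these conditional probabilities are $\ge 1-O(1/m)$. By the gentle measurement lemma, a measurement returning a fixed outcome with probability $1-\delta$ disturbs the state by $O(\sqrt\delta)$ in trace distance; hence each of these measurements moves the state by $O(1/\sqrt m)$, and telescoping over the first $T=\varepsilon\sqrt m$ of them shows the post-measurement state after handling $M_1,\dots,M_{k-1}$ stays within $O(\varepsilon)$ of the original message state $\rho_x$ for every $k\le T$. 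Consequently $\rho_x$ itself, measured with the $k$-th POVM, outputs a correct pair $(e_k,x_{e_k})$ with probability $1-O(\varepsilon)$, for all $k\le T$ simultaneously in expectation over $x$.

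\textbf{Information extraction.} Let $E_1,\dots,E_T$ be the edges and $B_1,\dots,B_T$ the claimed parities produced by measuring $\rho_X$ with these $T$ POVMs; on the high-probability event that all are correct, $B_k=\chi_{E_k}^{\!\top}X$. The Holevo bound and data processing give $q\ge\chi(\{p_x,\rho_x\})\ge I\big(X:(E_k,B_k)_{k\le T}\big)\ge I\big(X:(B_k)_k\mid (E_k)_k\big)$, and by the spreading property of the matchings this last quantity is $\Omega(T)=\Omega(\sqrt m)$ — conditioned on the chosen edges, the parities are $\Omega(T)$ essentially-independent and nearly-uniform bits of $X$. Hence $q=\Omega(\sqrt m)$.

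\textbf{Main obstacle.} The delicate step is the second paragraph together with its interaction with the third: a general protocol lets the Bobs perform one big joint measurement on the shared message rather than $m$ clean sequential ones, so one must argue (via deferred measurement / purification) that sequentializing loses nothing, and then control the fact that conditioning on the outcomes of $M_1,\dots,M_{k-1}$ perturbs the distribution of $x$ away from uniform — which would otherwise break the ``fresh independent bits'' count. The near-non-destructiveness guaranteed by the gentle measurement lemma is exactly what should keep these conditioning effects $O(\varepsilon)$-small, but making the mutual-information bookkeeping rigorous, and fixing the right matching family, is where the real work lies.
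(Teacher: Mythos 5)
The central step of your argument --- the gentle-measurement reduction --- does not work, and it is not needed. The gentle measurement lemma says a \emph{fixed} outcome occurring with probability $1-\delta$ implies $O(\sqrt{\delta})$ disturbance; here each per-matching measurement is an $N$-outcome POVM whose ``success'' event is a union of $N/2$ acceptable outcomes (any edge of $M_k$ with the right parity). High success probability of such a measurement does not bound the disturbance at all: the single-matching protocol of \Cref{lem:MHM_Qub} succeeds with probability $1$ yet completely collapses Alice's state onto a basis vector far from $\ket{\psi}$. This is exactly why the paper remarks that gentle-measurement/shadow-tomography arguments should \emph{not} apply to Hidden Matching, and it is why consumability can hold here in the first place. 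So your claim that the first $T=\Theta(\sqrt m)$ sequential measurements leave the message state within $O(\varepsilon)$ of $\rho_x$, and hence that the $k$-th POVM applied to the pristine $\rho_x$ would be correct, is false in general; since your Holevo step is applied to outcomes of measuring the pristine state, the proof as written breaks at its crux (which you yourself flag as the ``main obstacle'').

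The repair is to drop the sequentialization entirely: Holevo plus data processing bounds $I(\mathbf{X}:\mathbf{O})\le q$ for the \emph{actual} joint output $\mathbf{O}$ of the Bobs, however adaptively and destructively they measure, because $\mathbf{O}$ is generated from the $q$-qubit message and inputs independent of $\mathbf{X}$. The whole work then shifts to lower-bounding $H(\mathbf{O})$ using only correctness, which is what the paper does: it fixes the explicit family of cyclically shifted matchings (pairwise edge-disjoint for $m\le N/2$), so any correct output tuple contains $m$ distinct edges, hence spans a vertex set of size $\ge\sqrt{m}$; taking a spanning forest and peeling leaf parities (\Cref{lem:tree}) gives $H(\mathbf{O})=\Omega(\sqrt m)$. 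Your third paragraph is essentially this entropy count, and your ``spreading'' requirement is a heavier version of what is needed --- note also that your random-matching union bound degrades near $m=\Theta(N)$ (with $|S|=\Theta(T)=\Theta(\sqrt N)$ the $\binom{N}{|S|}$ term dominates $(|S|^2/N)^{T}$), which the explicit cyclic family avoids. With the gentle-measurement paragraph deleted and Holevo applied to the protocol's true outputs over an edge-disjoint matching family, your outline becomes the paper's proof.
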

\begin{proof}
\label{pf:MHM_QLB}
    Let us consider the distributional complexity of $\mathrm{MHM}_{N,m}$ where Alice's input is a uniform random string $\mathbf{X} \sim U(\{0,1\}^N)$. The Bobs have a deterministic input Y, where $M_1$ is just the matching $\{(i, i+1) | i \text{ odd}, i < N\}$. The matching $M_k$ is just the $k^{th}$ cyclic permutation on nodes on the left. The Bobs output random variables $o_k = (i_{k}, j_{k}, x_{i_k} \oplus x_{j_k})$ as their respective solutions. For notational convenience, we define $\mathbf{O} = o_1o_2\ldots o_m$. Note that since $m \leq N /2$, each matching consists of $N/2$ edges that do not appear in any other matching. It follows that for any choice of $\mathbf{O}$, no edge (as defined by the first two entries of each $o_k$) will be repeated.

    Let $\rho_{\mathbf{X}}$ be density matrix corresponding to the message of length $l$ sent by Alice, of dimension $2^l$. By Holevo's theorem, $\Info(\mathbf{X}:\mathbf{O}) \leq l$. We will show that if the Bobs solve $\mathrm{MHM}_{N,m}$ then $\Info(\mathbf{X}:\mathbf{O}) \geq \Omega(\sqrt{m})$. This gives us the required lower bound. 
    
    $\Info(\mathbf{X}:\mathbf{O}) = \Hentropy(\mathbf{X}) - \Hentropy(\mathbf{X} | \mathbf{O}) = n - \Hentropy(\mathbf{X} | \mathbf{O})$. To make the random variable $\mathbf{O}$ amenable to analysis, we remove dependencies in the output by considering a spanning forest of the graph induced by $V = \bigcup_k \{i_k, j_k\} = \cup_k \{i_k\} \bigcup \cup_k \{j_k\}$. We have that $|V| \geq \Omega(\sqrt{m})$ since we have a graph with $m$ distinct edges by construction. Take a spanning tree (which has size $\Omega(\sqrt{m})$) and call the collection of these random variables $\mathbf{O_T}$. We can write $\Hentropy(\mathbf{X} | \mathbf{O}) = \Hentropy(\mathbf{X} | \mathbf{O_T} \mathbf{O'}) \leq \Hentropy(\mathbf{X} | \mathbf{O_T})$.

    Now, note that parities encoded in $\mathbf{O_T}$ introduce $k = \Omega(\sqrt{m})$ binary constraints on the random variable $\mathbf{X}$ over the boolean hypercube, each of which reduces the support of the conditional distribution by a factor of $2$. Thus, we have that $\Hentropy(\mathbf{X|O_T}) \leq \log(2^{n - k}) = n-k$ which means that $\Info(\mathbf{X}:\mathbf{O}) \geq k = \Omega(\sqrt{m})$. This gives us the desired lower bound by Holevo's theorem.
\end{proof}

Combining \Cref{lem:MHM_Qub} and \Cref{lem:MHM_QLB} gives for $m \leq N/2$
\begin{equation}
    \frac{Q^{\rightarrow}(\mathrm{MHM}_{N,m})}{Q^{\rightarrow}(\mathrm{MHM}_{N,1})}=\tilde{\Omega}(\sqrt{m}).
\end{equation}
This implies that Multiple Hidden Matchings is a consumable quantum data problem, with consumability rate $\approx \sqrt{m}$.
Note that this is not the case classically. The deterministic lower bound in \Cref{lem:MHM_Dlb} also illustrates explicitly that the message sent by Alice to solve a single matching, if composed of raw bits of her input, can be immediately re-used to solve the problem for all possible matchings (since $N+1$ bits will contain the end-points of an edge of any possible perfect matching).

\begin{remark}
    We note that in the proof of \cref{lem:MHM_QLB}, even if $\Omega(m)$ Bobs outputted an answer, we would have that $|V| \geq \Omega(\sqrt{m})$. Therefore, the lower bound holds in the average case and Multiple Hidden Matchings is a strongly consumable data problem with quantum communication.
\end{remark}

Finally, we define a problem with a slightly differrent definition which does \emph{not} formally fit into the framework of Consumable Data (\cref{def:consumable}). However, it is conceptually similar and we can achieve a tight lower bound for it.

\begin{problem}[Hidden Matching with Many Edges ($\mathrm{HMM}_{N,m}$)]
    Alice is given a string $x\in \{0,1\}^n$. Bob is given a perfect matching $M$ over $[N]$. Their goal is for Bob to ouput $m$ tuples $(i_k, j_k, x_{i_k} \oplus x_{j_k})$ where $(i_k, j_k) \in M$ for all $i \in [m]$.
\end{problem}

We now show that the communication complexity of this problems scales linearly in $m$.

\begin{lemma}
    $Q^\rightarrow(\mathrm{HMM}_{N,m}) = \Omega(m)$ for $m\leq N$
\end{lemma}
\begin{proof}
    We define the distribution to be $\mathbf{X} \sim U(\{0,1\}^n)$ with a single perfect  matching $M$. Then define the random variable $\mathbf{O}$ in the same way as in the proof of \cref{lem:MHM_QLB}, but now we are guaranteed that $|V| \geq m$, because we are requiring that Bob outputs distinct edges from a matching. This implies that $I(\mathbf{O} : \mathbf{X}) \geq \Omega(m)$, giving us a lower bound of $\Omega(m)$ by Holevo's theorem.
\end{proof}

\subsection{Linear regression sampling}
\label{sec:mlrs}

Another key problem type for which it is possible to transform data into a rival good is sampling problems with a quantum communication advantage.  In this type of problem, Alice sends Bob a message, which Bob uses to sample from a target distribution with high accuracy.  The essence of the construction is that the quantum communication advantage allows Alice to reveal only a tiny fraction of the original data while allowing Bob to solve the problem, and the method by which he solves it destroys the data that was sent, not allowing it to be reused to generate more samples.  We consider here a sampling  variant of linear regression introduced by Montanaro et al. \cite{Montanaro2024-qj}:
\begin{problem}[Linear Regression Sampling  \cite{Montanaro2024-qj} ($\mathrm{LRS}_{N}$)]
	Alice is given a vector $x \in \bS^{N-1}$. Bob is given a matrix $B$. The goal is for Bob to produce a sample from the distribution $\cP$ over $[N]$ defined by 
	\begin{equation} 
		p_{i}=\left|\left[B^{+}x\right]_{i}\right|^{2}/\left\Vert B^{+}x\right\Vert _{2}^{2}.
	\end{equation}
\end{problem}

One can naturally generalize this problem to the setting of multiple samples as follows:
\begin{problem}[Multiple Linear Regression Sampling ($\mathrm{MLRS}_{N,m}$)]
    Alice is given a vector $x \in \bS^{N-1}$. Bob is given $m$ matrices $B_k$. The goal is for Bob to produce one sample from each distribution $\cP_k$ over $[N]$ defined by 
    \begin{equation} \label{eq:MLRS_dist}
        p^{(k)}_{i}=\left|\left[B_{k}^{+}x\right]_{i}\right|^{2}/\left\Vert B_{k}^{+}x\right\Vert _{2}^{2}.
    \end{equation}
    
\end{problem}
    Note that solving the above problem with some inaccuracy $\eta$ corresponds to sampling from some distribution with total variation error at most $\eta$ with respect to $\cP_k$. 
    In order to consider the communication complexity of these problems, we must first discretize the inputs so that they have finite size. We thus assume all real number are specified to $\log N$ bits of precision. 
We then have the following lemma for classical protocols.
\begin{lemma}[\cite{Montanaro2024-qj}] \label{lem:MLRS_cb}
    For constant total variation distance error $\eta$ in the sampled distribution,
    \begin{enumerate}[label=\roman*), align=left]
        \item $SR^\rightarrow_\eta(\mathrm{MLRS}_{N,1}) = \Omega(N \log N)$.
        \item For any $m$, $SR^\rightarrow_\eta(\mathrm{MLRS}_{N,m}) = O(N \log N)$.
    \end{enumerate}
\end{lemma}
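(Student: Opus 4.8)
\noindent\emph{Proof proposal.} The upper bound (ii) is the easy direction: Alice simply transmits her entire vector. Since every real number is specified to $\log N$ bits of precision, a full description of $x\in\bS^{N-1}$ takes $O(N\log N)$ bits, and once Bob holds an approximation $\widehat x$ to $x$ he can compute each $B_k^{+}\widehat x$ and sample from the corresponding normalized distribution locally, with no further communication and independently of $m$. It remains only to check that $\log N$ bits of precision per coordinate suffice: under the discretization of the $B_k$ as well, the quantity $\|B_k^{+}x\|_2$ and the relevant singular values of $B_k$ are bounded away from $0$ by $1/\mathrm{poly}(N)$, so a $\mathrm{poly}(1/N)$ perturbation of $x$ moves each $\cP_k$ by $\mathrm{poly}(1/N)$ in total variation; sending a constant factor more bits of precision than strictly needed pushes this below the allowed constant $\eta$. (This is consistent with, but much cheaper than, the generic bound of \Cref{lem:random}.)

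For the lower bound (i) the plan is a reduction from a large-alphabet indexing problem. Fix $T=\mathrm{poly}(N)$ and consider $\mathrm{INDEX}$ in which Alice holds $z=(z_1,\dots,z_{cN})\in[T]^{cN}$, a string carrying $\Theta(N\log N)$ bits, while Bob holds a position $j$ and must output $z_j$; a standard one-way argument gives $R^{\rightarrow}(\mathrm{INDEX})=\Omega(N\log N)$. I would then exhibit (a) an encoding $z\mapsto x(z)\in\bS^{N-1}$, computable by Alice from $z$ alone, and (b) for each position $j$ a matrix $B_j$, computable by Bob from $j$ alone, such that the sampling distribution $\cP$ induced by the pair $(x(z),B_j)$ is concentrated near the basis vector indexed by $z_j$, i.e.\ $B_j^{+}x(z)\approx \lambda(z)\,e_{z_j}$ for some scalar $\lambda(z)$. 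Given such a gadget, a single draw from any distribution within a small constant total variation distance of $\cP$ equals $z_j$ with probability at least $2/3$, so any one-way protocol solving $\mathrm{MLRS}_{N,1}$ with error $\eta$ (for $\eta$ a sufficiently small constant) yields a one-way protocol for $\mathrm{INDEX}$ with the same communication cost; hence $SR^{\rightarrow}_\eta(\mathrm{MLRS}_{N,1})=\Omega(N\log N)$. Note that it is precisely the tolerance of constant TV error (rather than exact sampling) that makes one sample enough to read out a $\log T$-bit symbol.

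The crux, and the step I expect to be the main obstacle, is constructing the gadget $(x(\cdot),\{B_j\})$. Because $B_j$ (hence $B_j^{+}$) may depend only on $j$, the fixed linear map $B_j^{+}$ must ``decode position $j$'' of an \emph{arbitrary} codeword $x(z)$ and turn it into a sharply peaked distribution over $[N]$; this forces the encoding to spread each symbol $z_i$ over enough coordinates (or phases) that a single linear readout can resolve it among $T$ values, while keeping $x(z)$ a unit vector, respecting the $\log N$-bit discretization, and — crucially — not inflating the ambient dimension beyond $O(N)$ (otherwise the bound degrades). A natural candidate is to use the phases of a structured complex unit vector (or signs together with magnitudes in the real case) with a small block of shared reference coordinates, choosing $B_j^{+}$ to act as a Fourier-type matched filter for the $j$-th block; one must then bound the signal-to-noise ratio of this readout, so that a constant fraction of the probability mass sits on $z_j$, and verify robustness both under discretization and under the normalization $\|B_j^{+}x\|_2$ that Bob does not control. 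I would follow the construction of Montanaro et al.\ \cite{Montanaro2024-qj} for these details.
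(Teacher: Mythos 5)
Your part (ii) is fine and coincides with the paper's argument (Alice ships her entire discretized input, $O(N\log N)$ bits, after which Bob samples locally for every $k$); the added robustness-under-discretization remark is harmless.

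Part (i), however, has a genuine gap: the entire content of the lower bound is the gadget $(x(\cdot),\{B_j\})$, and you do not construct it -- you explicitly defer it to \cite{Montanaro2024-qj}. But the proof there (Theorem 9, which is what the paper invokes) does not proceed via an indexing reduction of this form at all; it is based on lower bounds for \emph{distributed Fourier sampling}. So ``I would follow the construction of Montanaro et al.\ for these details'' does not close the hole, because no such construction exists in that reference, and what remains of your argument is only the routine outer shell (large-alphabet INDEX is $\Omega(N\log N)$, one sample suffices if the distribution is concentrated on a decodable set). Moreover, there is concrete reason to doubt the gadget can be built as described. To carry $\Theta(N\log N)$ bits in an $N$-dimensional unit vector with $\log N$ bits of precision per entry, the encoding must be essentially injective on $2^{\Theta(N\log N)}$ strings, so it cannot be a block/superposition encoding: if $x(z)=\sum_i v_i(z_i)$ with the $j$-th readout required to satisfy $B_j^{+}v_j(s)\approx e_s$ for $T=\mathrm{poly}(N)$ values of $s$ while suppressing all other blocks, then each block needs a signal subspace of dimension $\geq T$ nearly orthogonal to the others, forcing ambient dimension $\Omega(NT)\gg N$ -- exactly the dimension-inflation obstacle you flag, but with no escape route offered. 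Any working encoding would have to pack the symbols into amplitude precision nonlinearly, and it is then far from clear that a \emph{fixed} linear map $B_j^{+}$ can place a $2/3$ fraction of the mass of $|B_j^{+}x(z)|^2/\|B_j^{+}x(z)\|_2^2$ on a set decodable to $z_j$ simultaneously for all codewords. As it stands the lower bound is a plan whose crucial step is both unproven and unsupported by the cited source; to repair it you should either reproduce the distributed-Fourier-sampling argument of \cite{Montanaro2024-qj} or give an actual construction and analysis of the matched-filter encoding you sketch.
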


Meanwhile for quantum protocols, we have a linear dependence on $m$.

\begin{lemma} \label{lem:MLRS_qb}
We have the following bounds on $SQ_{\eta}^{\rightarrow}$,
    \begin{enumerate}[label=\roman*), align=left]
        \item For TV error $\eta \leq 1/4$, $SQ_{\eta}^{\rightarrow}(\mathrm{MLRS}_{N,m})=\Omega(m\log(N/m))$.
        \item For constant TV error $\eta$, $SQ_{\eta}^{\rightarrow}(\mathrm{MLRS}_{N,m})=O(m\log(N)\underset{k}{\max}(\left\Vert B_{k}^{+}\right\Vert ^{2}/\left\Vert B_{k}^{+}x\right\Vert _{2}^{2}))$.
    \end{enumerate}
\end{lemma}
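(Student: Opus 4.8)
\emph{Part (i), the lower bound.} The plan is an information-theoretic direct-sum argument against a hard input distribution on which the $m$ instances become independent. I would split $[N]$ into $m$ blocks of size $n:=N/m$; take $j_{1},\dots,j_{m}$ mutually independent with $j_{k}\sim\unif(\text{block }k)$, write $\mathbf{J}=(j_{1},\dots,j_{m})$, set $x=m^{-1/2}\sum_{k=1}^{m}e_{j_{k}}\in\bS^{N-1}$, and let $B_{k}$ be the diagonal projector onto block $k$. Then $B_{k}^{+}=B_{k}$ and $B_{k}^{+}x=m^{-1/2}e_{j_{k}}$, so $\cP_{k}$ is the point mass at $j_{k}$. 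Hence any protocol for $\mathrm{MLRS}_{N,m}$ with TV error $\eta\le 1/4$ must output $o_{k}$ with $\Pr[o_{k}=j_{k}]\ge 3/4$ for every $k$; Fano's inequality then gives $H(j_{k}\mid o_{k})\le 1+\tfrac14\log n$, hence $I(j_{k}:o_{k})\ge\tfrac34\log n-1=\Omega(\log(N/m))$ (for $n=O(1)$ one still has $I(j_{k}:o_{k})=\Omega(1)=\Omega(\log(N/m))$, so the estimate below holds for all $m\le N/2$). Since $j_{1},\dots,j_{m}$ are independent, the chain rule together with the fact that conditioning does not increase entropy yields, with $\mathbf{O}=(o_{1},\dots,o_{m})$,
\[
I(\mathbf{J}:\mathbf{O})\ \ge\ \sum_{k=1}^{m}I(j_{k}:\mathbf{O}\mid j_{<k})\ \ge\ \sum_{k=1}^{m}I(j_{k}:o_{k})\ =\ \Omega(m\log(N/m)).
\]
Finally $\mathbf{O}$ is obtained by a measurement on Alice's $l$-qubit message $\rho_{\mathbf{J}}$ followed by Bob's ($\mathbf{J}$-independent) local processing, so Holevo's theorem gives $l\ge I(\mathbf{J}:\mathbf{O})=\Omega(m\log(N/m))$; this bounds the distributional complexity and hence worst-case $SQ^{\rightarrow}_{\eta}$ from below (public coins, being input-independent, do not help).

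\emph{Part (ii), the upper bound.} The plan is to repeat, with fresh copies, the single-sample quantum protocol of \cite{Montanaro2024-qj}. Alice prepares $\ket{x}=\sum_{i=1}^{N}x_{i}\ket{i}$ on $\log N$ qubits and, for instance $k$, sends $T_{k}=O(\|B_{k}^{+}\|^{2}/\|B_{k}^{+}x\|_{2}^{2})$ copies of $\ket{x}$, for a total of $\sum_{k}T_{k}\log N=O(m\log N\cdot\max_{k}\|B_{k}^{+}\|^{2}/\|B_{k}^{+}x\|_{2}^{2})$ qubits. Using only local computation (free in this model), Bob would form a block-encoding of $B_{k}^{+}/\|B_{k}^{+}\|$, apply it to each received copy of $\ket{x}$, and measure the block-encoding ancilla: one trial succeeds with probability $\|B_{k}^{+}x\|_{2}^{2}/\|B_{k}^{+}\|^{2}$ (at most $1$, since $\|x\|_{2}=1$), and conditioned on success the post-measurement state is $B_{k}^{+}\ket{x}/\|B_{k}^{+}x\|_{2}$, whose computational-basis measurement is distributed exactly as $\cP_{k}$. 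Taking the constant in $T_{k}$ large enough, at least one of the $T_{k}$ trials succeeds except with probability $\eta$, so Bob's output lies within TV distance $\eta$ of $\cP_{k}$ for all $k$; the $\log N$-bit discretization of the inputs perturbs these distributions by only $1/\poly(N)$, which is absorbed into the constant $\eta$.

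\emph{Where the difficulty lies.} The only genuine design choice is the family $\{B_{k}\}$ in Part (i): it must force each sample to reveal $\Theta(\log(N/m))$ bits about $x$ while making the bits revealed by distinct instances pertain to disjoint coordinates of $x$, so that the information adds across instances; the block-projector construction achieves exactly this, after which the direct-sum bookkeeping and Holevo's bound are routine. A minor wrinkle is that, because the model is one-way, in Part (ii) Alice must know (a bound on) $\max_{k}\|B_{k}^{+}\|^{2}/\|B_{k}^{+}x\|_{2}^{2}$ in order to size her message; I would treat this as a promise parameter of the problem (one short number from Bob would also suffice, at negligible cost).
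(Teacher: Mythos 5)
Your proposal is correct and takes essentially the same route as the paper: part (i) uses the identical hard instance (block projectors $B_k$ and a vector of unit weight $m^{-1/2}$ on one uniformly random index per block, making each $\cP_k$ a point mass), with your Fano/chain-rule/Holevo bookkeeping simply inlining the random-access-coding lower bound of Nayak that the paper cites. Part (ii) likewise matches: the paper invokes Theorem 4 of Montanaro et al.\ with an extra factor of $m$, which is exactly the repeat-until-success block-encoding protocol you describe.
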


We prove this in \Cref{pf:MLRS_qb}.

While these upper and lower bounds match in terms of their $N$ dependence if $\left\Vert B_{k}^{+}x\right\Vert _{2}$ is relatively large (and in particular does not decay with $N$), they do not match in terms of their $m$ dependence. One example is when the features of $x$ that different samples are sensitive to are in some sense uniformly distributed, as in the construction used to obtain the lower bound in \Cref{lem:MLRS_qb}. In this case, we have $\underset{k}{\max}\left\Vert B_{k}^{+}\right\Vert^2/{\left\Vert B_{k}^{+}x\right\Vert^2 _{2}} = O(m)$. It follows that, restricting to such inputs, we have 
\begin{equation}
    \frac{SQ_{1/4}^{\rightarrow}(\mathrm{MLRS}_{N,m})}{SQ_{1/4}^{\rightarrow}(\mathrm{MLRS}_{N,1})}=\tilde{\Omega}(m).
\end{equation}
Based on the definitions of \Cref{sec:consumable_data}, we obtain that $\mathrm{MLRS}_{N,m}$ is a consumable data problem for quantum data, with consumability rate $m$.

\subsection{Decision problems} \label{sec:no_rival_quantum_data}

In the previous examples, we considered problems that exhibit an exponential quantum communication advantage. It is natural to ask if such an advantage implies consumability in some generic sense.  We will see that this is not the case when Bob's task is a decision problem. 
  
The examples we discuss here are based on the following problem:
\begin{problem}[Vector In Subspace ($\mathrm{VS}_{N,\theta}$) \cite{Kremer1995-kp}] 
    Alice is given a vector $x \in \bS^{N-1}$. Bob is given two orthogonal subspaces of dimension $N/2$ specified by projection operators $M^{(1)},M^{(2)}$. Under the promise that either $\left\Vert M^{(1)}x\right\Vert _{2}\geq\sqrt{1-\theta^{2}}$ or $\left\Vert M^{(2)}x\right\Vert _{2}\geq\sqrt{1-\theta^{2}}$ for $\theta < 1/\sqrt{2}$, determine which is the case. 
\end{problem}

It is known that Vector in Subspace exhibits an exponential advantage in quantum communication with respect to randomized classical communication complexity \cite{Regev2011-wq}. Consider the following generalization:

\begin{problem}[Vector In Multiple Subspaces ($\mathrm{VMS}_{N,\theta,m}$)] 
    Alice is given a vector $x \in \bS^{N-1}$. Bob is given $m$ pairs of orthogonal subspaces $M^{(1)}_j,M^{(2)}_j$. Given a similar promise to the vector in subspace problem for each pair of subspaces, the goal is to determine which subspaces $x$ has large overlap with. 
\end{problem}

The exponential advantage in quantum communication might suggest that for this problem as well, classical data will behave like a nonrival good while the quantum analog might behave like a rival good.  
This is because even for $m=1$, Alice must send a significant portion of her input to Bob, and thus she may not be able to derive value that is polynomial in $m$ for larger $m$. However, the problem can still be solved with relatively little quantum communication, since data states can be re-used in a manner that allows Bob to solve the problem for $m>1$ with Alice communicating a number of qubits that is only logarithmic in $m$. This can be achieved via shadow tomography:

\begin{theorem}[Shadow Tomography \cite{Aaronson2018-in} solved with Threshold Search~\cite{Badescu2021-kb}] \label{thm:shadow_tomography}
    For an unknown state $\ket{\psi}$ of $\log N$ qubits, given $m$ known two-outcome measurements $E_i$, there is an explicit algorithm that takes $\ket{\psi}^{\otimes k}$ as input, where $k = \tilde{O}(\log^2 m \log N \log(1/\delta)/ \gve^4 )$, and produces estimates of $\left\langle \psi\right|E_{i}\left|\psi\right\rangle $ for all $i$ up to additive error $\gve$ with probability greater than $1 - \delta$. $\tilde{O}$ hides subdominant polylog factors.
\end{theorem}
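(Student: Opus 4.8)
The plan is to present \Cref{thm:shadow_tomography} as the composition of two established ingredients: the quantum \emph{threshold search} primitive of Bădescu and O'Donnell \cite{Badescu2021-kb}, and the online (matrix-multiplicative-weights) learning framework for quantum states that underlies Aaronson's original shadow tomography algorithm \cite{Aaronson2018-in}. Recall that threshold search solves the following task using $\widetilde{O}(\log^{2} m / \gve^{2})$ copies of an unknown $\log N$-qubit state $\rho$: given two-outcome measurements $F_{1},\dots,F_{m}$ and thresholds $t_{1},\dots,t_{m}$, under the promise that either $\mathrm{Tr}(F_{i}\rho)\le t_{i}$ for all $i$ or $\mathrm{Tr}(F_{i}\rho)\ge t_{i}+\gve$ for some $i$, it decides which case holds and, in the latter case, returns such an index $i$ together with an $O(\gve)$-accurate estimate of $\mathrm{Tr}(F_{i}\rho)$ (the estimate costs only a constant-factor overhead).

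First I would set up the learning loop. Maintain a classical description of a hypothesis state, starting from $\omega_{0}=I/N$; at round $t$ the hypothesis $\omega_{t}$ has expectation values $c_{i}^{(t)}:=\mathrm{Tr}(E_{i}\omega_{t})$ computable in time $\poly(N,m)$. Run threshold search on the $2m$ measurements $\{E_{i}\}\cup\{I-E_{i}\}$ with thresholds $c_{i}^{(t)}+\Theta(\gve)$ and $1-c_{i}^{(t)}+\Theta(\gve)$ and accuracy parameter $\Theta(\gve)$, spending a fresh batch of $\widetilde{O}\big(\log^{2}m\cdot\log(T/\delta)/\gve^{2}\big)$ copies of $\ket{\psi}$ (the $\log(T/\delta)$ factor boosts the per-round success probability to $1-\delta/T$, with $T$ chosen below). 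If threshold search reports ``all below threshold'' then $|c_{i}^{(t)}-\bra{\psi}E_{i}\ket{\psi}|\le\gve$ for every $i$, so halt and output $\{c_{i}^{(t)}\}_{i}$. Otherwise it returns an index $i_{t}$ and an estimate $b_{t}$ of $\bra{\psi}E_{i_{t}}\ket{\psi}$ witnessing $|c_{i_{t}}^{(t)}-b_{t}|=\Omega(\gve)$, and we form $\omega_{t+1}$ from $\omega_{t}$ by the online-learning-of-quantum-states update applied to $(E_{i_{t}},b_{t})$.

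Next I would invoke the regret bound. Tracking the potential $S\big(\ketbra{\psi}{\psi}\,\big\|\,\omega_{t}\big)$, which equals $\log N$ at $t=0$ and stays nonnegative, the standard analysis shows that each update caused by an $\Omega(\gve)$-violated measurement decreases the potential by $\Omega(\gve^{2})$; hence the loop terminates after $T=O(\log N/\gve^{2})$ rounds. Multiplying the per-round copy budget by $T$ gives the claimed total $k=\widetilde{O}\big(\log^{2}m\cdot\log N\cdot\log(1/\delta)/\gve^{4}\big)$, and a union bound over the at most $T$ threshold-search invocations gives overall success probability at least $1-\delta$. Both threshold search and the multiplicative-weights update are explicit, so the composed procedure is explicit.

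I expect the one genuinely delicate point — the step needing the most care — to be the interface between the two primitives: the thresholds and the accuracy parameter of the threshold-search calls must be tuned so that (i) a ``no violation'' report actually certifies all $2m$ one-sided inequalities, and hence $\gve$-accuracy of every output estimate $c_{i}^{(t)}$, and (ii) a reported violation delivers enough accuracy in $b_{t}$ that the multiplicative-weights update is guaranteed to decrease the relative-entropy potential by a fixed $\Omega(\gve^{2})$, so that the $O(\log N/\gve^{2})$ mistake bound applies. The remaining pieces — the $\widetilde{O}(\log^{2}m/\gve^{2})$ sample complexity of threshold search, the regret/mistake bound, and the probability bookkeeping — are quoted directly from \cite{Badescu2021-kb} and \cite{Aaronson2018-in}.
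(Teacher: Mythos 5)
This theorem is not proved in the paper at all: it is imported from the literature, with the bracketed title already indicating the intended route (shadow tomography via threshold search, following B\u{a}descu--O'Donnell and Aaronson). Your reconstruction is exactly that standard argument---per-round threshold search over $\{E_i\}\cup\{I-E_i\}$ costing $\tilde{O}(\log^{2}m\,\log(T/\delta)/\gve^{2})$ copies, combined with the $O(\log N/\gve^{2})$ mistake bound from the relative-entropy potential of online learning of quantum states---and it correctly yields the stated $\tilde{O}(\log^{2}m\,\log N\,\log(1/\delta)/\gve^{4})$ copy complexity, so it is correct and consistent with the approach the paper cites.
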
 

$\mathrm{VMS}_{N,\theta,m}$ is a problem of estimating $m$ expectation values up to some constant error (due to the constraint on $\theta$) on a target state. If polynomial error is required, it is known that $\Omega(N)$ qubits of communication may be needed, and hence quantum communication is essentially equivalent to classical communication (from e.g. lower bounds on estimating inner products \cite{Cleve1999-dy}). Allowing constant error, \Cref{thm:shadow_tomography} says that $\mathrm{polylog}(m)$ qubits of communication suffice to solve the problem. This directly implies that, at least if Alice sends multiple copies of her state, a lower bound analogous to \Cref{lem:MLRS_qb} is impossible, as Bob does not require a number of qubits polynomial in $m$. This shows that an exponential communication advantage is not a sufficient condition for quantum data to behave like a rival good. 

Given that multiple entangled copies of a quantum state are known to be a more powerful resource than single copies \cite{Huang2022-ot}, it would also be interesting to consider a setting where Alice sends only single copies of her data states. One way to do this is by introducing assumptions about the computation Bob is allowed to perform with his message. It may be possible to remove this assumption by utilizing certified deletion \cite{Broadbent2020-xw}. While requiring additional encryption, this could enable Alice to only send a copy of her state after receiving a certificate that Bob has deleted the previous copy, ensuring that multi-copy measurements cannot be performed. 

A key difference between the Vector-in-Subspace problem and the other problems we consider is that the former is a decision problem (a two-outcome measurement), while the latter are sampling problems or relations. This difference was already captured by \Cref{lem:quantum}, where we showed that if the number of outcomes are small then the problem does not exhibit consumability for a large range of parameters. In the next section, we get around this limitation by considering the multiparty setting.


\subsection{Multiple Bobs: A communication arms race}

The above picture changes when more than two parties are involved. Consider a setting where Alice has a vector which she can encode in a quantum state $\ket{x}$ and each of $m$ Bobs has an observable $O_i$, Alice is only willing to send the Bobs copies of $\ket{x}$ (when using quantum communication), and the Bobs cannot (i) store multiple copies of $\ket{x}$ or (ii) communicate quantum states between them, this is equivalent to the setting of learning without quantum memory that is studied in \cite{Chen2022-kx}. More precisely, this is a setting where each Bob can perform a POVM on a single copy of $\ket{x}$ only, and exchange classical messages which correspond to the classical memory used in this setting. In contrast, the setting of learning with quantum memory (as per \cite{Chen2022-kx}) is one where the Bobs are allowed quantum communication (but still can measure only a single copy of $\ket{x}$ each), with the content of the quantum communication channel corresponding to the quantum memory. In both cases, Alice's messages correspond to samples of a quantum state (unknown to Bob) as is standard in learning problems. While the results of \cite{Chen2022-kx} apply to samples of a mixed state described by a density matrix $\rho$, they also apply to a purification of $\rho$ in a larger space. This will not affect the scaling with $m$ which is the main object of interest for our purposes. 

Define by $\cO$ an ensemble of two-outcome POVMs given by $O_{i}=U_{i}Z_{n}U_{i}^{\dagger}$ for $0\leq i<m/2$ and $O_i=-U_{i-m/2}Z_{n}U_{i-m/2}^{\dagger}$ for $m/2\leq i<m$, where the $U_i$ are drawn i.i.d. from the Haar measure and $Z_n$ acts only on the last qubit. 

When only classical communication is used between Alice and the Bobs, an optimal lower bound of $\Omega(\sqrt{N})$ for estimating the expectation value of a single two-outcome observable with constant probability is applicable \cite{Gosset2019-di}. Lemma 1 of that paper also provides a matching upper bound in the $m$-observable case (up to logarithmic factors). Namely, estimating $m$ expectation values of unit norm observables to constant error can be done with probability $2/3$ by sending $\tilde{O}(\log(m)\sqrt{N})$ bits from Alice to Bob (where $\tilde{O}$ hides $\polyl(N)$ factors). Alice requires no knowledge of the observables themselves. This protocol is based on sending $O(\log(m))$ random stabilizer sketches of Alice's input state $\ket{x}$. Each sketch involves Alice drawing a Clifford unitary $C$ from a uniform distribution over the Clifford group $\cC_n$ ($n=\log N)$, and computing $\bigl\langle0^{\otimes(n-k)}z\bigr|C\bigl|x\bigr\rangle $ for all $z \in \ket{0,1}^k$ for $2^k = \tilde{O}(\sqrt{N})$. Alice generates $O(\log(m))$ i.i.d. sketches in this way and sends both the measurement results and a description of the Clifford unitaries to the Bobs. Each Clifford unitary is defined by specifying $O(n^2)$ one or two-qubit gates from a small set, and thus has an efficient classical description. 

If Alice instead sends copies of her input encoded in the amplitudes of a quantum state $\ket{x}$ to the Bobs, but we allow classical communication only between the Bobs, and restrict the Bobs to performing single-copy measurements, the number of samples of $\ket{x}$ required is linear in $m$ \cite{Chen2022-kx}:

\begin{theorem}[Corollary 5.7, \cite{Chen2022-kx}]
    With constant probability over $O_i$ drawn i.i.d. from $\cO$, estimating the expectation values of all $O_i$ w.r.t. $\ket{x}$ without quantum communication between Bobs with success probability at least $2/3$ requires $\Omega\left(\min\left\{ m/\log(m),N\right\} /\varepsilon^{2}\right)$ copies of $\ket{x}$. 
\end{theorem}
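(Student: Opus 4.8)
This is a single-copy, classical-communication sample-complexity lower bound of exactly the kind established in \cite{Chen2022-kx}, and the plan is to import their tree-based adversary argument, instantiated on the ensemble $\cO$. By the discussion just preceding the statement, ``each Bob performs a single-copy measurement and the Bobs exchange only classical messages'' is precisely the model of learning with classical memory, so the task reduces to lower bounding the number $k$ of copies of $\ket x$ that Alice must send.

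\emph{Step 1: reduce estimation to a recovery task.} The hard instances are purifications of mildly perturbed maximally mixed states. With $N = 2^n$, the ``null'' state is $\rho_0 = I/N$ and the planted states are $\rho_i = (I + 3\varepsilon O_i)/N$; one either plants a single uniformly random index, or (to reach the full $m/\log m$ rate) activates many directions simultaneously via $\rho_\sigma \propto I + 3\varepsilon \sum_j \sigma_j O_j$ for a random sign string $\sigma$, which is a legitimate density matrix because, with high probability over the Haar unitaries defining $\cO$, $\|\sum_j \sigma_j O_j\| = O(\sqrt{m \log N})$ --- forcing $\varepsilon$ to be small, which is the source of the $1/\varepsilon^2$ in the bound. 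Since $O_i^2 = I$ and $\operatorname{tr} O_i = 0$, estimating $\langle x | O_j | x \rangle$ for all $j$ to error $\varepsilon$ with probability $2/3$ suffices to detect, resp.\ identify, the planted structure; and the $\pm$-pairing built into $\cO$ is exactly what makes $\mathbb{E}_{O \sim \cO}[O] = 0$, so the planted perturbation is mean-zero --- the property the second-moment bound below relies on. Alice actually sends copies of a purification of the planted state on an enlarged register; as checked in \cite{Chen2022-kx}, this leaves the relevant moments unchanged, so it does not affect the scaling in $m$.

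\emph{Step 2: tree representation and second moment.} Model the protocol as a rooted tree of depth $k$: internal node $v$ carries the POVM $\{F_v^s\}$ applied to the next copy (depending arbitrarily on earlier classical outcomes and messages, with possibly very many outcomes), edges are labelled by outcomes, and a leaf $\ell$ has weight $p^\rho(\ell) = \prod_t \operatorname{tr}(F_{v_t}^{s_t} \rho)$ along its path. By Le Cam, success requires $d_{TV}(p^{\rho_0}, \overline{p^{\rho_i}}) = \Omega(1)$, so it suffices to show the likelihood ratio $L(\ell) = \overline{p^{\rho_i}(\ell)}/p^{\rho_0}(\ell)$ obeys $\mathbb{E}_{p^{\rho_0}}[L^2] = 1 + o(1)$. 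Writing $L(\ell) = \mathbb{E}_i \prod_t (1 + 3\varepsilon\, \beta_{v_t}^{s_t}(i))$ with $\beta_v^s(i) = \operatorname{tr}(F_v^s O_i)/\operatorname{tr}(F_v^s)$, the identity $\mathbb{E}_{s \sim \operatorname{tr}(F_v^s)/N}[\beta_v^s(i)] = \operatorname{tr}(O_i)/N = 0$ kills the single-index cross terms level by level, leaving $\mathbb{E}_{p^{\rho_0}}[L^2] = \mathbb{E}_{i,i'}\, \mathbb{E}_{\mathrm{path}} \prod_t (1 + 9\varepsilon^2 \Lambda_{v_t}(i,i'))$, where $\Lambda_v(i,i') = \frac{1}{N} \sum_s \operatorname{tr}(F_v^s O_i) \operatorname{tr}(F_v^s O_{i'})/\operatorname{tr}(F_v^s)$. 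The diagonal $i = i'$ is bounded crudely ($\Lambda_v(i,i) \le 1$, contributing $\frac{1}{m} e^{O(\varepsilon^2 k)}$); for $i \ne i'$ one integrates over the independent Haar unitaries, using $\mathbb{E}_{O \sim \cO}[O] = 0$ to get $\mathbb{E}_{i \ne i'} \Lambda_v(i,i') = 0$ and a Weingarten computation to get $\mathbb{E}_{i \ne i'} |\Lambda_v(i,i')|^2 = O(1/N)$, the typical overlap of two independent random observables, so that after summing along the path and exponentiating the off-diagonal contributes $1 + O(\varepsilon^4 k^2/N^2)$. Combining, $\mathbb{E}_{p^{\rho_0}}[L^2] = 1 + o(1)$ as long as $k$ stays below $\min\{m/\log m, N\}/\varepsilon^2$: the $N/\varepsilon^2$ ceiling is where the second-moment control breaks --- and, consistently, where single-copy random global Clifford ``classical shadows'' become feasible --- while the $m/\log m$ term reflects that Bob must pin down $\Omega(m)$ hidden parameters whereas a single weak single-copy measurement delivers only $O(\log m)$ bits' worth of usable information about them.

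\emph{Main obstacle.} The crux is the off-diagonal tree bound: because a single-copy learner may use POVMs with exponentially many outcomes, one cannot argue copy-by-copy against a fixed measurement, and the Haar concentration of $\Lambda_v(i,i')$ must be married to the branching structure of the tree --- via the martingale / truncation steps of \cite{Chen2022-kx} --- to keep the accumulated error $o(1)$ uniformly over leaves. The only other delicate point is the mixed-to-pure reduction, where Bob additionally holds the purifying register; but since that register is uncorrelated with the planted structure it does not affect the moments, hence not the dependence on $m$.
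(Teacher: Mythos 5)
First, note that the paper does not prove this statement at all: it is imported verbatim as Corollary 5.7 of \cite{Chen2022-kx}, with the only original content being the observation (made in the surrounding text) that the single-copy, classical-communication setting of the Bobs is exactly the ``learning without quantum memory'' model and that passing to a purification of the hard mixed states does not change the scaling in $m$. So the only thing to assess is whether your reconstruction of the cited proof is sound. You do identify the right framework --- the learning-tree representation of single-copy protocols, the planted ensemble $\rho_i=(I+3\varepsilon O_i)/N$ versus the maximally mixed state, Le Cam, and the need to marry Haar concentration to the tree structure --- and that is indeed the skeleton of the argument in \cite{Chen2022-kx}.

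However, the central quantitative step of your Step 2 is wrong as stated, and it is not a presentational issue. With a single planted index, the chi-squared/second-moment bound you write down has the diagonal contribution $\frac{1}{m}e^{\Theta(\varepsilon^{2}k)}$, which already exceeds any constant once $k\gtrsim \log m/\varepsilon^{2}$; so $\mathbb{E}_{p^{\rho_0}}[L^{2}]=1+o(1)$ simply cannot hold out to $k\sim\min\{m/\log m,\,N\}/\varepsilon^{2}$, and a plain second-moment argument caps out at a $\log m/\varepsilon^{2}$ lower bound. This is precisely why \cite{Chen2022-kx} do not use $\mathbb{E}[L^{2}]$: their many-versus-one distinguishing theorem controls a one-sided event for the likelihood ratio (a martingale/concentration argument along root-to-leaf paths, with truncation of rare large fluctuations of $\mathrm{tr}(F_v^{s}O_i)/\mathrm{tr}(F_v^{s})$ over the Haar draw), and it is that argument --- not a variance calculation --- that yields the $m/\log m$ rate. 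Your proposed rescue, planting many directions at once via $\rho_\sigma\propto I+3\varepsilon\sum_j\sigma_j O_j$, does not repair this: positivity forces $3\varepsilon\,\lVert\sum_j\sigma_j O_j\rVert\le 1$, i.e.\ $\varepsilon\lesssim 1/\sqrt{m\log N}$ by your own norm estimate, so this construction cannot certify the stated bound at the constant $\varepsilon$ for which the theorem is invoked in this paper (and the claim that this constraint ``is the source of the $1/\varepsilon^{2}$'' conflates the accuracy parameter of the estimation task with the perturbation size of the hard instance). In short: right model reduction and right hard ensemble, but the engine of the lower bound needs to be the one-sided likelihood-ratio/martingale analysis of \cite{Chen2022-kx}, not the second-moment computation you rely on.
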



\begin{table}[t]
    \centering
    \renewcommand{\arraystretch}{1.5}
    \begin{tabular}{l c}
         & \textbf{(Qu)bits sent from Alice to the Bobs}\\
        \hline
        Classical A $\rightarrow$ Bs, Classical Bs $\leftrightarrow$ Bs & $\tilde{\Theta}(N^{1/2})$ \cite{Gosset2019-di}\\
        \hline

        Quantum A $\rightarrow$ Bs, Classical Bs $\leftrightarrow$ Bs & $\tilde{\Theta}(m \log N)$ \cite{Huang2020-gf}\\
        \hline
        Quantum A $\rightarrow$ Bs, Quantum Bs $\leftrightarrow$ Bs & $O((\log(m) \log(N))^2)$ \cite{Badescu2021-kb} \\
    \end{tabular}
    \caption{A communication arms race in estimating expectation values of two-outcome observables to within constant error: Data behaves as a consumable resource if Alice is only willing to send quantum states encoding her data, while the Bobs can only communicate classically. This ceases to be the case if only classical communication is used, or if the Bobs can communicate quantum states. $\tilde{\Theta}(\cdot)$ hides factors of $\log m$.}
    \label{tab:two_outcome}
\end{table}

Note that this is worst-case over $\ket{x}$ (if $\ket{x}$ was uniformly random Bobs could just guess $0$). Note also that the $O_i$ are chosen so that classical shadows do not help (for the operators in question the Hilbert-Schmidt norm is $||O^2_i||=N$, which is roughly equivalent to the shadow norm that sets the sample complexity of classical shadows \cite{Huang2020-gf}). A matching upper bound (up to $\log(m)$ factors, as long as $m < N$) is obtained by the straightforward approach in which Alice sends each Bob $O(1/\gve_2)$ copies of her state.  

When the Bobs are allowed to use quantum communication, we are essentially back to the two-party version of the problem, since they can jointly use shadow tomography \cite{Aaronson2018-in, Badescu2021-kb} to estimate all the expectation values using a logarithmic number of copies of $\ket{x}$. These results are summarized in \Cref{tab:two_outcome}. 

\section{Economic implications of consumable data} \label{sec:economics}
We now illustrate the economic implications of the complexity-theoretic asymmetric direct sum property. We give some background on the effect of the ability to copy classical data on economic models, and, more concretely, design an auction where the buyer is forced to purchase $m$ copies of the data if they want to use it $m$ times, increasing the payoff of the data seller compared to an analogous situation when selling classical data.

\subsection{Data as an economic resource in production theory} \label{sec:production_theory}

Production theory \cite{Kurz1995-yj} is one of the principal frameworks for the quantitative study of economic systems. A fundamental object of interest within this framework is the \textit{production function} $F:\mathbb{R}_{+}^{M}\rightarrow\mathbb{R}_{+}$ that quantifies in some form the output of an economic agent, for example the goods produced by a firm. The inputs to $F$ denote the resources required to produce said goods, such as labor, capital and raw materials. For conventional goods of this form, which cannot be replicated at zero cost (and are referred to as \textit{rival} goods), it is known that the production function is typically a degree $1$ homogeneous function of its inputs (at least locally when restricted to some set $S$):
\begin{equation} \label{eq:prod_function}
    F(\lambda x)=\lambda F(x)
\end{equation}
for any $\lambda \geq 0$\footnote{Strictly speaking, this relationship holds only if each good can serve as a substitute for another, which is a standard assumption.}. This captures the notion that e.g. doubling the number of raw materials will double a firm's output. It follows directly from Euler's theorem for homogeneous functions that within the interior of $S$,
\begin{equation}
    F(x)=x\cdot\frac{\partial F}{\partial x}.
\end{equation}
Since the output of the production function is a measure of the firm's capacity to pay for the needed resources, we see that if the price of resource $i$, denoted $p_i$, is set according to 
\begin{equation} \label{eq:price}
    p_{i}=\frac{\partial F}{\partial x_{i}},
\end{equation}
for all $i \in [M]$, then the output of the firm suffices exactly to purchase all the resources required, and there is no surplus profit. This is known as competitive equilibrium, which maximizes social welfare in the sense that the price of each good is commensurate to its usefulness in increasing the total output \cite{Arrow1951-ih, Debreu1959-di}. 

While it has long been understood at a qualitative level that data is an inherently different resource than the ones considered above due to the ability to copy it for free \cite{Schumpeter1942-jr, Arrow1962-hf}, the quantitative form of this statement was realized decades later by the seminal work of Romer \cite{Romer1990-sn}. If we include data $y$ as an input into the production function, we instead have 
\begin{equation} \label{eq:prod_function_w_data}
    F(\lambda x, y)=\lambda F(x, y)
\end{equation}
rather than the expected need to double each input proportionate to match production as in $F(\lambda x, \lambda y) = \lambda F(x, y)$. This is because the data used by one process can be copied and used by several with negligible additional cost. Euler's theorem once again gives 
\begin{equation}
    F(x,y)=x\cdot\frac{\partial F}{\partial x}.
\end{equation}
However, since increasing the amount of data will generally increase the output (say by improving the quality of inference), we have $\frac{\partial F}{\partial y}>0$. It follows that 
\begin{equation}
    F(x,y)<x\cdot\frac{\partial F}{\partial x}+y\frac{\partial F}{\partial y}.
\end{equation}
Due to this inequality, it is impossible to set prices according to \cref{eq:price}. If this were done for all inputs including data, the total output would be insufficient to pay for all the required resources. As a result, markets involving data must be inherently inefficient in the sense that one must underpay for some resource, or must include some external mechanism to enforce adequate compensation for resources that can be freely replicated. Mechanisms such as patent law or subsidies that incentivize innovation are all examples of this. Other examples are afforded by the trusted third parties that are introduced in proposals for data markets and handle the data in lieu of the data buyers themselves \cite{Agarwal2019-fv}. In the context of strategic games that model data selling, the ability to copy data is also manifest in the payoff for the data seller being independent of the number of buyers, unless a mechanism is put in place by which the data buyers all agree to pay in advance for their data \cite{Nageeb_Ali2020-hv}.

\subsubsection{Consumable data as a factor of production}

We can interpret the results of \Cref{sec:examples} within this framework (at the limit of large $m,N$ so that $m$ can be considered to be a continuous variable, and computing derivatives with respect to it becomes meaningful). Taking the linear regression sampling problem as an example, the solution of $\mathrm{MLRS}_{N,m}$ is analogous to the output of a production function, with the number of samples $m$ and Alice's message equivalent to $\lambda$ and $y$ respectively.
The result of \Cref{lem:MLRS_cb} is then analogous to \cref{eq:prod_function_w_data}. Up to constant factors, this is an example of the well-known nonrival nature of classical data. Alice must send a significant portion of her input to Bob for him to produce even a single sample, and once Alice sends her full input he can produce an unlimited number of samples in this way. If Alice were to sell Bob her data in the setting of a strategic game, her potential payoff will be essentially independent of the value that Bob can derive (since this is proportional to $m$). 

On the other hand, \Cref{lem:MLRS_qb} indicates that if Alice insists on using quantum communication, the data is analogous to a rival good as described by \cref{eq:prod_function}. Bob can still produce $m$ samples, but this requires that Alice sends at least a number of qubits proportional to $m$. If Alice were to charge Bob for each qubit sent for example, she would obtain a payoff proportional to the Bob's output $m$ (as long as $m < N$). The lower bound indicates that this scaling holds regardless of the strategy Alice uses to encode her input into the message, and of the strategy Bob uses to process this message. Using classical resources alone this would be impossible to achieve. We make these notions more precise in the context of a strategic game that models a data market in \Cref{sec:auction}. 

A similar analogy can be made with respect to the Multiple Hidden Matching problem and the multi-party observable estimation problem. 


\subsection{A posted price data auction with consumable data} \label{sec:auction}

We would like to identify more concretely the economic consequences of the consumable nature of quantum data. 
We consider a formulation naturally related to auction theory \cite{Krishna2009-de, Roughgarden2016-xp}. Alice's action space $A_A = \bR_+$ is the set of prices she charges for a single bit or qubit of her input. Once Alice fixes a price $p$, Bob is free to purchase as many bits/qubits as he wants. Bob's action space is thus $A_B = \bN$, and we denote the number he purchases by $b$. This is known as a posted price auction with only a single bidder and multiple items (or a particularly simple combinatorial auction). 
Assume the number of samples $m$ takes values in $[\overline{m}]$ and Alice has no knowledge of it (say she holds a uniform prior). We also assume the matrices $B_i$ are chosen in a worst-case fashion (in order for our communication lower bounds to be applicable). 

For any values of $m,p,b$, the payoffs of the two players are 
\begin{equation}
    v_{A}(m,p,b)=pb,\quad v_{B}(m,p,b)=\#\text{S}(m,b)-pb,
\end{equation}
where $\#\text{S}(m,b)$ represents the number of samples Bob can produce using a message of $b$ bits/qubits, given that he holds $m$ such $B_i$). 

Consider first the quantum communication case. We know from our lower bound \Cref{lem:MLRS_qb} that for sufficiently large $m$, there is an absolute constant $C$ such that, if Bob were to purchase $b$ qubits produced by Alice, then 
\begin{equation}
    \#\text{S}^{Q}(m,b)\leq\frac{Cb}{\log(N/\#\text{S}^{Q}(m,b))}\approx\frac{Cb}{\log(N)}
\end{equation}
for some absolute constant $C$. We also assume $N \gg m$ which allows us to use the approximation $\log N - \log \# \text{S}^{Q}(m,b) \approx \log N$ since this slightly simplifies the analysis.
Since additionally $\#\text{S}(m,b)\leq m$ by definition, we have the upper bound 
\begin{equation}
    v^Q_{B}(m,p,b)\leq\min\left\{ \frac{Cb}{\log N},m\right\} -pb.
\end{equation}
If we also assume that Bob's payoff is maximized at the point $b^\star$ that maximizes this upper bound, he is interestsed in solving
\begin{equation}
\underset{b}{\max}\min\left\{ \frac{Cb}{\log N},m\right\} -pb=\left\{ \begin{array}{ccc}
m(1-\frac{p\log N}{C})\ \  & 0 \leq p<\frac{C}{\log N} & (b^{\star}=\frac{m\log N}{C})\\
0\ \  & p\geq\frac{C}{\log N} & (b^{\star}=0)
\end{array}\right.
\end{equation}
with the corresponding value of $b^\star$ in the right column. 
Alice's payoff is maximized by thus choosing $p$ as close as possible to $C/\log N$ from below without exceeding it, and will be equal to $b^\star(m,p) p = mp\log (N)/C=\tilde{
\Omega}(m)$. This holds for any $m$ for which \Cref{lem:MLRS_qb} holds, even though Alice has no knowledge of $m$.


In the classical case, we know the problem is non-consumable from \Cref{lem:random}. This implies that for $m=1$, there is a message of length $\kappa$ independent of $m$ which Alice can send, which Bob can then re-use to produce say $\rho m$ samples with some constant probability, for some $\rho \leq 1$. 

This implies
\begin{equation}
    v_{B}^{C}(m,p,b)=\mathbf{1}\left[b\geq\kappa\right]\rho m-pb.
\end{equation}
Bob thus solves 
\begin{equation}
\underset{b}{\max}\mathbf{1}\left[b\geq\kappa\right]\rho m-pb=\left\{ \begin{array}{ccc}
\rho m-p\kappa & 0\leq p<\frac{\rho m}{\kappa} & (b^{\star}=\kappa)\\
0 & p\geq\frac{\rho m}{\kappa} & (b^{\star}=0)
\end{array}\right.
\end{equation}
Note that unlike the quantum case, Alice has no way of knowing how to choose $p$ appropriately ahead of time, since the critical value below which she receives no payoff depends on $m$. If she wants to guarantee a nonzero payoff she has to choose $p=\rho/\kappa$ (i.e. assume $m=1$) in which case her payoff is independent of $m$. 



\section{Open questions}
\label{sec:open}

Our work raises the following open questions in communication complexity:

\begin{enumerate}
    \item Can the lower bound on the one-way quantum communication complexity of $\mathrm{MHM}_{N,m}$ be improved to $\Omega(m)$ or even $\Omega(m\log N)$ for $m \ll \sqrt{N}$? 
    \item Can the class of problems with a quantum asymmetric direct sum property be characterized in some generality?
    \item Are there explicit problems with asymmetric direct sums for randomized communication? For decision problems, the scaling can be at most $O(\log m)$. Can we get a $\Omega(m)$ scaling for relations?
    \item We considered the scenario where the graph of required computations is a \emph{star graph}, where one input must be paired with every other input. You can more generally consider the asymmetric direct sum for any \emph{bipartite graph}. It is easy to see that the proof of \cref{lem:MHM_QLB} generalizes to say that the quantum communication complexity scales with the sum of the square-roots of the degrees of the vertices on the left. Can this be improved?
\end{enumerate}

A remark regarding question 3 is that for a random relation, it is actually impossible to get a constant probability of success for \emph{every} Bob for superconstant number of Bobs because we do not have amplification of success probability. This is why we ask for explicit examples. The lower bound technique would also need to avoid the trivial upper bound if we only require a constant fraction of Bobs to succeed (\cref{lem:s-random}).



Our results are unconditional but restricted to specific problems and one-way communication. By making additional assumptions or utilizing the strategic nature of the problems we consider, it may be possible to extend the class of problems that enable consumable data. One possible direction is outlined below. Moreover, in \cref{apx:discussion} we explore the broader implications of our results beyond the scope of complexity theory.

\paragraph{Computational assumptions.} A setting we have not yet considered, that is touched upon by the task of shadow tomography, is one where the computational power of Bob is restricted. It has been noted that general shadow tomography procedures are expected to scale polynomially with the dimension of the Hilbert space of $\rho$ or the trial state $\rho_T$.  If Bob is restricted to polylog computational time, then the creation of the clonable hypothesis state may become impossible.  This is analogous to the effect in cryptographic no-cloning theorems on pseudorandom quantum states~\cite{Ji2018-cc}, where even when sample efficient cloning is possible, no computationally efficient scheme can be used to clone the states of interest. In this context, it is worth noting that learning of certain states that have efficient descriptions, such as pseudorandom states~\cite{Zhao2023-fs}, is known to be computationally hard. The addition of computational restrictions on Bob hence potentially widens the class of consumable data tasks, but requires moving beyond a communication complexity model that permits unbounded computation.

\subsection*{Acknowledgements}
The authors thank Scott Aaronson, Dmytro Gavinsky, and Or Sattath for insightful discussions and comments on the manuscript. S.J. was supported by Scott Aaronson's CIQC grant.

\bibliographystyle{alphaurl}
\bibliography{paperpile.bib, refs.bib}

\appendix

\section{Omitted Proofs}

    \begin{proof}[Proof of \Cref{lem:MHM_Dlb}] \label{pf:lem:MHM_Dlb}
        We begin by proving $D^\rightarrow(\mathrm{MHM}_{N,1}) \geq N/2+1$.
    
        A deterministic protocol $\cP$ for $\mathrm{MHM}_{N,1}$ is defined by a matrix with $2^N$ rows denoting the inputs to Alice and $(N-1)!!$ columns denoting the inputs to Bob ($(N-1)!!$ is the number of perfect matchings over $[N]$). The entry in the matrix corresponding to inputs $(x,M)$ is a tuple $(i,j,b)$ such that $(i,j) \in M$ and $b = x_i \oplus x_j$. Define by $\tau$ a message sent by Alice, and by $S_\tau$ the subset of the rows for which Alice sends $\tau$ to Bob. The choice of $(i,j,b)$ depends on $x$ only through the message $\tau$. 
        Since the protocol is deterministic, for a given column, the entries in each column of $S_\tau$ must have the same value since they share the same $\tau,M$, so we may write (with slight abuse of notation)
        \begin{equation}
            \mathcal{P}(x,M)=\mathcal{P}(\tau,M)=(i,j,b), \quad (i,j) \in M.
        \end{equation}
        Thus the rows of $S_\tau$ are all identical, and we can view each entry as a constraint that each vector $x$ for which Alice sends the message $\tau$ must obey. We will bound the maximal possible size of $S_\tau$ by bounding the number of $x$s that can satisfy all these constraints. 

        The constraints on the bits can be thought of as edges on a graph $G = (V,E)$ with nodes $V$ indexed by $[N]$. We begin with $E=\emptyset$ and choose a sequence of matchings $\cM = \{M^{\ell}\}$. For every matching, $\cP$ must produce a valid output that selects an edge from the matching and constrains the corresponding entries of $x$. While we have no control over which edge is chosen, we will choose $\cM$ in such a way that at each step of the algorithm, the size of the connected components in $G$ increases for any edge output by $\cP$. 
        
        Denote by $\{C^\ell_i\}$ the connected components of $G$ at step $\ell$, and $C^\ell = \underset{i}{\cup}C^\ell_{i}$. Initially we thus have $\left|C^0\right|=0$. 
        \begin{enumerate}[label=\roman*), align=left]
            \item $|C^\ell| \leq N/2$
            
                    We start with an arbitrary matching $M^{1}$. For any $x$ for which Alice communicates $\tau$, the entries in $S_\tau$ in the column corresponding to $M^{1}$ is  
                    $S_\tau$ must contain an edge $(i,j) \in M^1$, hence after adding $(i,j)$ to $E$ and $M^1$ to $\cM$ we have $|C^1|=2$. Denoting by $D^\ell$ the disconnected nodes, we next define a matching $M^2$ that pairs each node in $C^1$ with some node in $D^1$. The remaining nodes of $D^1$ are paired among themselves. Note that $M^2$ cannot be equal to $M^1$, since $M^1$ contained an edge between two nodes that are both in $C^1$ while $M^2$ does not. We add $(i,j)$ to $E$ where $\cP(\tau, M^{2}) = (i,j,b)$. If the edge connects $C^1$ and $D^1$, then $|C^2|=3$. Otherwise, $|C^2|=4$. 
                    
                    We pick $M^3,\dots$ in the same fashion, defining $M^{\ell+1}$ by pairing each node in $C^\ell$ with a node in $D^\ell$ (and pairing the remaining nodes arbitrarily). This can be done as long as $|C^\ell| \leq N/2$. At every stage, we are guaranteed that $M^{\ell+1} \notin \cM$ by the same argument used for $M^2$, hence we are assured that it is a valid choice. 
                    
                    After at most $N/2-1$ such steps, we have either $|C^\ell|=N/2+1$ or $|C^\ell|=N/2+2$. From this point a different strategy is required, since there are not enough disconnected nodes in $D^\ell$ to pair with all the nodes in $C^\ell$. Subsequently, we order the nodes in $C^\ell$ by first ordering the connected components $\{C^\ell_i\}$ by size, with $C_0^\ell$ being the largest (or tied for the largest, breaking ties arbitrarily), and then arbitrarily ordering the nodes within each $C^\ell_i$. 
                    
            \item $|C^\ell| > N/2$ and $|C^\ell_0| \leq N/2$

                    Order the nodes in $C^\ell$ in the manner specified above. Denote by $R^\ell_-$ the first $N/2$ nodes in this ordering, and by $R^\ell_+$ the remaining $|C^\ell|-N/2$ nodes. Define the matching $M^{\ell+1}$ by first pairing each node in $R^\ell_+$ with a node in $R^\ell_-$ in descending order (i.e. starting with the nodes in $C_0^\ell$). Note that two nodes in the same connected component cannot be paired in this way. This is because, if this occurred for some connected component $C_i$, this would imply that either $|C_i| > |C_0^\ell|$ (since $C_i$ must have a node in $R^\ell_-$, the boundary between $R^\ell_-$ and $R^\ell_+$ divides $C_i$, so every node in the matching so far is in $C_i$, and we started the pairing in $R^\ell_-$ with the nodes in $C_0^\ell$ and went through all of them and reached $C_i$) contradicting the imposed ordering, or else $C_i=C_0^\ell$, in which case since some nodes in $C^\ell_0$ are also in $R^\ell_+$, we have $|C_0^\ell|>N/2$ and we terminate the algorithm. 
                    Having thus paired all the nodes in $R^\ell_+$ (we can always do this since $|R^\ell_+|\leq N/2$), we complete $M^{\ell+1}$ by pairing the remaining nodes in $R^\ell_-$ with the unconnected nodes $D^\ell$ in an arbitrary way. Note that $M^{\ell+1}$ does not contain any edge between two nodes that are in the same connectivity component. Thus it is distinct from all of the matchings already in $\cM$ (since by construction each one contained such an edge) and we can add it to $\cM$. We add $(j,k)$ to $E$ where $\cP(\tau, M^{\ell+1}) = (j,k,b)$.  
                    
                    For the same reason specified above, the edge from $M^{\ell+1}$ that is selected by $\cP$ will either connect two previously unconnected components in $C^\ell$ hence $C^{\ell+1}_k=C^{\ell}_i\cup C^{\ell}_j$ for some $i,j,k$, or else connect some $C^\ell_i$ with a previously unconnected edge (meaning $|C^{\ell+1}|=|C^{\ell}|+1$).
                    
                    We run the above algorithm until some step $\tilde{\ell}$ when either (a) $|C^{\tilde{\ell}}|=N$ or (b) $C_0^{\tilde{\ell}}>N/2$. 
        \end{enumerate}
        The algorithm is guaranteed to terminate in $O(N)$ steps. If (a) occurs, then either (a1) there are strictly less than $N/2$ connectivity components or (a2) there are exactly $N/2$ connectivity components, since each one contains at least two nodes. In case (a1), there are strictly less than $N/2$ independent degrees of freedom in the choice of the bits of any $x$ for which Alice sends the message $\tau$, since each connectivity component $C^{\tilde{\ell}}_i$ implies $|C^{\tilde{\ell}}_i|$ constraints of the form $x_j \oplus x_k = b$ where $\cP(\tau, M) = (j,k,b), (j,k) \in M$ connects two nodes in $C^{\tilde{\ell}}_i$. In case (a2), there are $N/2$ connectivity components of size $2$. We then consider a final matching $M^{\tilde{\ell}+1}$ that first divides $\{C^{\tilde{\ell}}_i\}$ into groups of two $\{K_i\}$ and then pairs each node to a node in a different connectivity component within the same $K_i$. As before, this matching is valid since $M^{\tilde{\ell}+1} \notin \cM$. After including the edge in $\cP(\tau,M^{\tilde{\ell}+1})$ into $E$, $G$ will contain $N/2-1$ connected components. As before, there are strictly less than $N/2$ degrees of freedom in choosing $x$. In case (b), there is a single component of size strictly larger than $N/2$. Thus even if all the remaining nodes are disconnected, there are strictly less than $N/2$ degrees of freedom once again. 

        In conclusion, in all cases we obtain that the number of rows of $S_\tau$ is at most $2^{N/2-1}$. The number of possible messages Alice must send is therefore at least $2^N/2^{N/2-1}=2^{N/2+1}$ and thus the number of bits Alice must send in order to solve $\mathrm{MHM}_{N,1}$ is at least $N/2+1$. Since this bound is valid for the multi-Bob version of the problem as well, we have $D^\rightarrow(\mathrm{MHM}_{N,m}) \geq N/2+1$. 

        The upper bound is trivial: Alice sends the Bobs the first $N/2+1$ bits of her input. These are sufficient for the Bobs to compute the output for all $m$ matchings simultaneously. The result follows. 
    \end{proof}

\begin{proof}[Proof of \Cref{lem:MLRS_cb}] \label{pf:MLRS_cb}
    \begin{enumerate}[label=\roman*), align=left]
        \item Theorem 9 of \cite{Montanaro2024-qj}, applied to square matrices. The proof is based on lower bounds for distributed Fourier sampling.
        \item It follows from the ability of Alice to send her whole input to Bob to complete the task.
    \end{enumerate}
\end{proof}

\begin{proof}[Proof of \Cref{lem:MLRS_qb}] \label{pf:MLRS_qb}
\begin{enumerate}[label=\roman*), align=left]
    \item 
    Say Alice is given a binary vector $y$ of length $m\log(N/m)$ and there are $m$ Bobs. Each Bob uses the matrix   
    \begin{equation}
        B_j = \underset{i=(N/m)j}{\overset{(N/m)(j+1)}{\sum}}\left|i\right\rangle \left\langle i\right|.
    \end{equation}
    Alice then divides her bits into $m$ sets of size $\log(N/m)$ and treats the bits in each set as an integer $r_j \in [N/m]$. She creates a vector $x$ of length $N$ by concatenating a unary encoding of these numbers, meaning
        \begin{equation}
        [x_{[(N/m)j:(N/m)(j+1)]}]_i=\sqrt{\frac{1}{m}}\delta_{ir_{j}},
    \end{equation}
    where we used $x_{[l:m ]}$ denotes the subset of the entries of a vector ranging from $[l, m)$.  

    Suppose Alice and the Bobs manage to solve $\mathrm{MLRS}_{N,m}$ with inaccuracy $\eta$. This means that Bob produces a sample from a distribution that is at most $\eta$ in TV from each of his target distributions $\cP_j$. From the definition of $x$ and the $B^{(j)}$, $\cP_j$ is be a delta function at $r_j$. This means that with probability at least $1-2\eta$, Bob recovers the $\log(N/m)$ bits of $r_j$ by performing a computational basis measurement. 
    It follows that Alice's message to Bob is a random-access encoding of $m\log(N/m)$ bits. 
    From known lower bounds on the number of qubits needed for random access coding \cite{Nayak1999-if}, if $2 \eta < 1/2$, Alice must send at least $\Omega(m\log(N/m))$ qubits to the Bobs. 
 
    \item This follows immediately from the bound of Theorem 4 of \cite{Montanaro2024-qj} with an additional factor of $m$ due to the number of samples, and using $||x||_2=1$. The bound uses an amplitude-encoding of $x$, followed by the application of $B^+_k$ using block-encoding. If two-way communication is allowed, the complexity can be improved to $O(m\log(N)\underset{k}{\max}\left\Vert B_{k}^{+}\right\Vert /\left\Vert B_{k}^{+}x\right\Vert _{2})$ since Alice and Bob can run amplitude amplification.
\end{enumerate}

\end{proof}

\section{Discussion}
\label{apx:discussion}
We demonstrated that there exist problems for which encoding classical data into quantum states leads to behavior that is akin to that of rival, or consumable, goods, which is generally not possible using classical data alone. The inherent privacy benefits of amplitude-encoded data might also facilitate computation with proprietary data, giving users fine-grained control over the dissemination of their private data without the need for additional encryption. The setup we consider also does not require end-users to possess a quantum computer in order to be valuable. Instead, the user must simply trust an entity possessing a networked quantum computer to distribute data states on their behalf. This is similar to entrusting a bank to distribute funds on the behalf of an account holder. While our results are based on communication complexity, they rely on the properties of the data encoding itself, and thus are also relevant in a scenario where different parties are provided access to the same quantum memory at different times, without requiring networked quantum computers.

Being a preliminary investigation into the possibility of using quantum networks in this manner, our results do not immediately apply to problems with clear economic value. If this were the case, it could enable novel types of data markets and incentive structures for the production of data. It is worth noting however that our results for the linear regression sampling problem apply also to a related problem in which Bob obtains a state that encodes the solution to a linear system rather than a classical sample. Such states are known to be strictly more powerful resources than classical samples \cite{Aharonov2003-sw}, and could potentially be useful in learning tasks such as updating the value of a linear estimator with new data (which is typically achieved with the recursive least squares algorithm).


The form of the quantum communication lower bound that indicates the rival behavior of quantum data is reminiscent of a direct sum theorem. Direct sum theorems demonstrate that the complexity of solving $m$ independent instances of certain problems scales linearly with $m$. They have been studied extensively in both the classical \cite{Shaltiel2003-mx, Braverman2013-to, Lee2008-ss} and quantum \cite{Sherstov2011-yk, Jain2021-of} setting. These results are not directly applicable since in our setting the inputs to Alice are not independent. Thus, this work motivates an \emph{asymmetric} direct sum result for classes of communication relations. 

In analogy to the potential clonability of quantum states with structure, there is a sense in which any non-consumable data may be cloned with respect to a particular task sample efficiently, even when cloning the overall state containing the information remains sample inefficient.  This is exemplified by the shadow tomography task above in which the task is solved via the creation of a classical representation of a hypothesis $\rho_T$, such that $\tr(E_i \rho_T) \approx \tr(E_i \rho)$ for all $i$ for the ground truth state $\rho$.  This classical representation $\rho_T$ need not be close in trace distance such that $||\rho - \rho_T||_{\text{tr}}$ is small, as would be required for a high fidelity cloning of the true state.  However it suffices for the task of shadow tomography, and admits an entirely classical representation that may be cloned through classical communication at will, making the data non-consumable, hence this task is clonable even when the underlying states might not be.

In restricting access to data that is used for computation, the setting we consider bears some resemblance to that of differential privacy~\cite{Dwork2014-gq, Aaronson2019-on}. In differential privacy, a query is promised not to reveal too much about individual datapoints. This is typically achieved classically by adding noise to data, while we achieve a similar capability in spirit by using a noiseless encoding into quantum states. 

\end{document}